
\documentclass[letterpaper, 10 pt, conference]{ieeeconf}  

\IEEEoverridecommandlockouts                              

\overrideIEEEmargins                                      



\usepackage{graphicx}
\graphicspath{ {./images/} }
\usepackage{amsmath} 
\usepackage{amssymb}  
\usepackage{mathrsfs,color}

\usepackage{algorithm, algorithmic, xfrac}



\usepackage{theorem}
\newtheorem{lemma}{Lemma}
\newtheorem{definition}{Definition}
\newtheorem{remark}{Remark}

 \newtheorem{theorem}{Theorem}[section]


\newcommand{\enve}{\mathcal{E}}
\DeclareMathOperator*{\argmax}{arg\,max}

\title{\LARGE \bf
Competitive Perimeter Defense in Tree Environments
}

\author{Richard L. Frost$^{1}$ and Shaunak D. Bopardikar$^{2}$
\thanks{This work was sponsored by the Army Research Laboratory and was accomplished under Cooperative Agreement Number W911NF-17-2-0181.}
\thanks{$^{1}$Richard L. Frost is with the Department of Computer Science and Engineering, Michigan State University {\tt\small frostric@msu.edu}}%
\thanks{$^{2}$Shaunak D. Bopardikar is with the Department of Electrical and Computer Engineering, Michigan State University
}%
}

\begin{document}

\maketitle
\thispagestyle{empty}
\pagestyle{empty}

\begin{abstract}

We consider a perimeter defense problem in a rooted full tree graph environment in which a single defending vehicle seeks to defend a set of specified vertices, termed as the perimeter, from mobile intruders that enter the environment through the tree's leaves. We adopt the technique of competitive analysis to characterize the performance of online algorithms for the defending vehicle. We first derive fundamental limits on the performance of any online algorithm relative to that of an optimal offline algorithm. Specifically, we give three fundamental conditions for finite, 2, and \(\frac{3}{2}\) competitive ratios in terms of the environment parameters. We then design and analyze three classes of online algorithms that have provably finite competitiveness under varying environmental parameter regimes. Finally, we give a numerical visualization of these regimes to show the comparative strengths and weaknesses of each algorithm. 

\end{abstract}

\section{Introduction}

This paper considers an online perimeter defense problem in a tree graph environment in which a single defending vehicle attempts to intercept intruders before they enter a region of the graph known as the perimeter. Tree graphs commonly arise in situations such as rural or underdeveloped road systems \cite{basnet1999heuristics}. The scenario studied here might arise when it is necessary for autonomous vehicles to intercept or track targets in such a location, e.g. drones intercepting incoming vehicles to warn of upcoming road hazards.  In this paper we confine ourselves to a class of tree graphs known as full trees. Full trees allow for both a convenient parameterization scheme and contain the most possible intruder entrances and the largest perimeter for any tree that satisfies those parameters. This perimeter defense problem is online in the sense that the defending vehicle has complete information of an intruder only when it is present in the environment and is not aware of when or where new intruders will enter the environment.

Perimeter defense problems have previously been studied in a variety of environments under varying sets of assumptions (cf.~the survey~\cite{shishika2020review}). These studies tend to focus on strategies for either a small number of intruders~\cite{von2020multiple} or intruders released as per some stochastic process \cite{macharet2020adaptive,smith2009dynamic,bullo2011dynamic}. These problems have also been extensively studied in the context of reach-avoid and pursuit-evasion games~\cite{das2022guarding,pourghorban2022target}. However, these works typically do not consider a worst-case scenario in which large numbers of intruders are deployed strategically to overwhelm the defender(s). Recognizing this, \cite{bajaj2021competitive} and \cite{bajaj2022competitive} utilized the competitive analysis technique to design algorithms for worst case inputs in linear and conical environments. This technique \cite{sleator1985amortized,gutierrez2006whack} characterizes the performance of an online algorithm by comparing its performance to that of an optimal offline algorithm on the same input. This value is known as the \emph{competitive ratio} of an online algorithm. 

Another related body of work is graph clearing in which a team of searching agents try to capture a mobile intruder who is aware of the searchers' locations. Strategies for locating both mobile and static intruders in such a scenario were explored for graphs by \cite{borra2015continuous}, and in partial information settings by~\cite{kalyanam2016pursuit,sundaram2017pursuit}. Determining the number of searchers necessary to guarantee the capture of the intruder is known to be NP-complete for general graphs, but can be found in linear time for trees \cite{megiddo1988complexity}. Similarly, determining an optimal search strategy is NP-hard on graphs but can be done in polynomial time for trees \cite{kolling2009pursuit}. This differs from the perimeter defense problem considered here in that we have only a single defender (searcher) and each intruder is locked to a fixed course.


The primary contribution of this paper is to generalize the scope of the algorithmic strategies and fundamental limits derived for linear \cite{bajaj2021competitive} and conical \cite{bajaj2022competitive} environments to a new class of tree-based environments. Specifically, we consider defense in a rooted full tree with edges of unit length, total depth \(d\in\mathbb{N}\), and all vertices of distance less than \(d\) from the root having \(\delta\in\mathbb{N}\) children. The perimeter is then defined as all vertices at a distance less than or equal to \(\rho\in \{1,\dots, d-1\}\) from the root vertex. Intruders enter the environment from the tree's leaves and move at a fixed speed of \(v< 1\) towards the nearest vertex in the perimeter. A single defending vehicle moving with a maximum speed of 1 patrols the environment to capture as many intruders as possible before they reach the perimeter. The specifics of the environment are detailed in Section~\ref{section:problem-formulation}. 

Our specific contributions are as follows, we establish necessary conditions for the intruder velocity \(v\) in terms of the tree parameters \(d\), \(\rho\), and \(\delta\) for any online algorithm to be \(c\)-competitive for a finite \(c\), 2-competitive, or \(\frac{3}{2}\)-competitive (Section~\ref{section:fundamental-limits}). We then design three online algorithms that have provable competitiveness under various parameter regimes. Specifically, we give an algorithm that is 1-competitive, an algorithm whose competitiveness is a function of \(\delta\) and \(\rho\) for the environment, and a class of algorithms whose competitiveness varies based on a sweeping depth parameter (Section~\ref{section:algorithms}). Finally, we provide a numerical visualization of the parameter regimes in which the fundamental limits and algorithm guarantees apply (Section~\ref{section:numerical-viz}).

\section{Problem Formulation}
\label{section:problem-formulation}

In this section, we formally introduce the models of the environment, defending vehicle and intruder vehicles. We begin by formally defining a full tree.
Let \(T = (V,E,\mathcal{W})\) be a weighted tree graph rooted at vertex \(r\in V(T)\) with weight function, \(\mathcal{W}\), assigning unit weight to each edge in \(E\).  Further, for vertices \(v_{i}, v_{j} \in V(T)\), we say that \(v_{i}\) is the \emph{child} of \(v_{j}\) if \((v_{i}, v_{j}) \in E(T)\) and \(\textit{dist}(v_{i}, r) > \textit{dist}(v_{j}, r)\), where the distance \(\textit{dist}\, : V \times V \to \mathbb{R}_{\geq 0}\) denotes the length of the shortest path between any two vertices measured over the weights of the path's edges. \(T\) is a \emph{full tree} with depth \(d\in \mathbb{N}_{\geq 2}\) and branching factor \(\delta \in \mathbb{N}_{\geq 2}\) if: 
\begin{enumerate}
    \item \(\forall v \in V(T),\)  \(dist(v,r) \leq d\)
    \item Each vertex \(v \in V(T)\) such that \(dist(v,r) < d\) has exactly \(\delta\) child vertices, and
    \item Each vertex \(v \in V(T)\) such that \(dist(v,r) = d\) has no child vertices.
\end{enumerate}

An environment \(\enve(d, \delta, \rho)\) is the full tree with depth \(d\) and branching factor \(\delta\) rooted at some vertex \(r\). Within \(\enve(d, \delta, \rho)\), is the set of \emph{perimeter vertices} \(P(\enve)\). \(P(\enve)\) is defined as the set of all vertices \(v\) such that \(dist(v,r) = \rho\). These vertices can be thought of as the boundary between the perimeter region of the environment and the region the intruders move through to reach the perimeter. As \(\enve(d, \delta, \rho)\) is a full tree, the number of vertices \(|P(\enve)| = \delta^{\rho}\). Next, we define the set of intruder entrances for an environment, \(L(\enve)\), as the set of all vertices \(v\) such that \(dist(v,r) = d\). This gives us that \(|L(\enve)| = \delta^d\). A visualization of the environment \({\enve(d=2,\rho=1, \delta=2)}\) is given in Figure~\ref{fig:sample-environment}.


\paragraph{Intruders}
An intruder is a mobile agent that enters the environment from any of the vertices in \(L(\enve)\). After entering the environment, an intruder moves at a fixed speed \( v \in(0, 1)\) along the shortest path to the nearest vertex in \(P(\enve)\). Intruders do not change their speed or deviate from this shortest path.

\paragraph{Defending Vehicle}
As in \cite{bajaj2021competitive}, the defense is a single vehicle with motion modeled as a first order integrator. The defending vehicle always begins at the root vertex \(r\) and can freely traverse the environment via its edges, moving at a maximum speed of unity. We also assume that the defender is aware of the intruder velocity \(v\). 

\paragraph{Capture and Loss}
An intruder is said to be captured if its location coincides with that of the defender before the intruder reaches a perimeter vertex. An intruder is said to be lost if it reaches a perimeter vertex without being captured. We give ties in capture or loss to the defender, which is to say that an intruder may be captured exactly on a perimeter vertex without being lost. 


The edges of the environment graph do not merely represent connectivity between vertices, but constitute continuous one dimensional spaces. Thus, at any time instant, a vehicle (defender or intruder) may be located either at a vertex or at some location on an edge. Consider an edge \(e\) and let \(v_i\) and \(v_j\) be its endpoints such that \(dist(v_{i}, r) < dist(v_{j}, r)\). We map the location of a vehicle on edge \(e\) to the interval \([0,1]\) such that 0 corresponds to \(v_i\) and 1 corresponds to \(v_j\). Thus, the location of a vehicle at time instant \(t\) is given by a tuple \((e, x)\), where \(e\) is an edge and \(x\in [0,1]\). We then define the set of intruder locations, \(\mathcal{Q}(t)\), as the location tuples of all intruders in the environment at time instant \(t\). 

Now consider a tuple of the form \((t, N(t), \mathcal{Z})\) where \(t\) is a time instant, \(N(t)\) is the number of intruders released at time \(t\), and \(\mathcal{Z}\) maps the \(N(t)\) intruders to the intruder entrances where they will be released.
An input instance \(\mathcal{I}\) is the set of such tuples for all \(0\leq t \leq F\), where \(F\) is some final time instant. Note that given \(\mathcal{I}\) and the intruder velocity \(v\), it is possible to derive \(\mathcal{Q}(t)\) for any \(t\leq F\). We can now formally define online and offline algorithms for the defending vehicle.

\begin{figure}[h]
    \centering
    \vspace{-10pt}
    \includegraphics[width=0.5\linewidth]{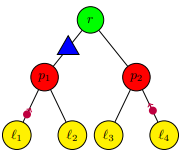}
    \caption{An environment with depth \(d=2\), perimeter depth \(\rho=1\), and branching factor \(\delta=2\). The small purple dots represent the intruders trying to reach perimeter vertices \(p_1\) and \(p_2\) after arriving from intruder entrances \(\ell_1\) and \(\ell_4\). The defending vehicle is represented by the blue triangle.}
    \label{fig:sample-environment}
    \vspace{-10pt}
\end{figure}

\vspace{-10pt}
\begin{definition}[Online Algorithm]
An \emph{online algorithm} 
\(\mathcal{A}\) is one which assigns the defender a speed in the interval \([0,1]\) and a direction along the shortest path between its current location and some vertex \(g\) as a function of the current locations of intruders in the environment and the intruder velocity.    
\end{definition}

\vspace{-18pt} 
\begin{definition}[Offline Algorithm]
An \emph{offline algorithm} is an algorithm \(\mathcal{O}\) that computes the defender's speed and direction as a function of the entire input instance \(\mathcal{I}\) and the intruder velocity \(v\). Such an algorithm is aware a priori of when and where all intruders will be released, i.e., a non-causal algorithm. We say an offline algorithm is optimal if captures the maximum number of intruders possible from \(\mathcal{I}\).

\end{definition}
\vspace{-15pt} 

\begin{definition}[Competitive Ratio, \cite{bajaj2021competitive, bajaj2022competitive}]
    Given an environment \(\enve(d,\delta,\rho)\), an input instance \(\mathcal{I}\), an intruder velocity \(v\), and an online algorithm \(\mathcal{A}\), let \(\mathcal{A}(\mathcal{I})\) be the number of intruders from \(\mathcal{I}\) captured by a defender using \(\mathcal{A}\). Now, let \(\mathcal{O}\) be an optimal offline algorithm that maximizes the number, \(\mathcal{O}(\mathcal{I})\), of captured intruders from \(\mathcal{I}\). Then, the \emph{competitive ratio of \(\mathcal{A}\) on input instance \(\mathcal{I}\)} is then \({c_{\mathcal{A}}(\mathcal{I}) = \frac{\mathcal{O}(\mathcal{I})}{\mathcal{A}(\mathcal{I})}}\). The \emph{competitive ratio of \(\mathcal{A}\) for an environment \(\enve\)} is \({c_{\mathcal{A}}(\enve) = \sup_{\mathcal{I}} c_{\mathcal{A}}(\mathcal{I})}\). Finally, \emph{the competitive ratio of an environment \(\enve\)} is \(c(\enve) = \inf_{\mathcal{A}} c_{\mathcal{A}}(\enve) \). We say that an online algorithm \(\mathcal{B}\) is \emph{\(c\)-competitive for an environment \(\enve(d,\delta,\rho)\)} if \(c_{\mathcal{B}}(\enve) \leq c\), for some constant \(c \geq 1\).
\end{definition}

\vspace{-8pt}
Note that it is preferable for an online algorithm to have a low competitive ratio, as this corresponds to it having similar performance to what is optimal. For instance, a 1-competitive algorithm matches the performance of an optimal offline algorithm for all inputs. Under these definitions, determining the competitive ratio of online algorithms is commonly done by considering inputs that offer some clear advantage to an optimal offline that is not available to an online defender.




\medskip

\emph{\bf Problem Statement:} The goal is to derive fundamental limits on the competitiveness of any online algorithm and to design online algorithms for the defender on the full tree environment and characterize their competitiveness.

\section{Fundamental Limits}
\label{section:fundamental-limits}

We give three fundamental limits on the competitiveness of any online algorithm in terms of the environment's parameters. First, we adapt the results of \cite{bajaj2021competitive} to the tree environment by finding optimal linear environments embedded within the full trees to show bounds on finite and 2-competitiveness. We then give a limit that is fully unique to the tree environment that characterizes parameter regimes in which online algorithms cannot perform better than \(\frac{3}{2}\)-competitive. We first define some relevant concepts.

\vspace{-7pt}
\begin{definition}[\textit{Descendant Vertices}]
For \(v_{i}, v_{j} \in V(\enve)\), if there exists a path from the root \(r\) to some \(\ell \in L(\enve)\) that includes both \(v_{i}\) and \(v_{j}\), we say that \(v_{i}\) is the \emph{descendant} of \(v_{j}\) if \(\textit{dist}(v_{i}, r) > \textit{dist}(v_{j}, r)\).
\end{definition}
\vspace{-13pt}
\begin{definition}[\textbf{Branch of a Tree Graph}]
    For a vertex \(v \in V(\enve)\), the \emph{branch of \(\enve\)} rooted at \(v\), denoted \(B(v)\), is the induced subgraph of \(\enve\) consisting of \(v\) and all descendants of \(v\). \(B(v)\) is a full tree with branching factor \(\delta\) and depth \((d - dist(r,v))\).
\end{definition}
\vspace{-13pt}
\begin{definition}[\textbf{Branch Entrances}]
    For a branch \(B(v)\), the \emph{branch entrances} are \(ent(B(v)) = L(\enve) \cap V(B(v))\).
\end{definition}
\vspace{-10pt}
\begin{theorem}
\label{thm:finit-com}
    For any environment \(\mathcal{E}(d, \delta, \rho)\), if the intruder velocity \(v > \frac{d-\rho}{2\rho}\), then there does not exist a \(c\)-competitive algorithm for any finite, positive number \(c\).
\end{theorem}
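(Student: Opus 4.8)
The goal is to show that under the stated velocity bound the competitive ratio of the whole environment is unbounded, i.e.\ \(c(\mathcal{E})=\inf_{\mathcal{A}}\sup_{\mathcal{I}}c_{\mathcal{A}}(\mathcal{I})=\infty\), which is exactly the statement that no finite \(c\) can bound any online algorithm. The plan is to reduce the tree problem to a one-dimensional perimeter-defense instance of the type analyzed in \cite{bajaj2021competitive} by exhibiting an \emph{optimal embedded linear environment}, and then to transport their ``no finite competitive ratio'' lower bound back to the tree. Concretely, I would fix two leaves \(\ell_1,\ell_2\in L(\mathcal{E})\) lying in the branches \(B(c_1),B(c_2)\) of two distinct children \(c_1,c_2\) of the root \(r\), together with the two perimeter vertices \(p_1,p_2\in P(\mathcal{E})\) that are the perimeter ancestors of \(\ell_1,\ell_2\). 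The unique \(\ell_1\)--\(\ell_2\) path runs through \(r\), so it is a segment of length \(2d\) on which the two approach corridors \(\ell_i\to p_i\) (each of length \(d-\rho\)) are separated by the ``interior'' stretch \(p_1\to r\to p_2\) of length \(2\rho\); confined to this path the problem is precisely the linear instance of \cite{bajaj2021competitive} with intruder speed \(v\) and unit defender speed.

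The quantitative heart of the adaptation is to verify that the finite-competitiveness threshold of this embedded line is exactly \(v=\frac{d-\rho}{2\rho}\). I would argue this through the elementary interception/shuttle comparison that drives the lower bound: once the defender has committed to the perimeter vertex of one corridor, say \(p_1\), the time it needs to reach the other corridor is governed by \(\textit{dist}(p_1,p_2)=2\rho\), whereas a fresh intruder entering at \(\ell_2\) reaches \(p_2\) after time \(\frac{d-\rho}{v}\). Setting the defender's descent depth \(t-\rho\) along \(B(c_2)\) equal to the intruder's depth \(d-vt\) gives the earliest interception time \(t^{\star}=\frac{d+\rho}{1+v}\), and comparing with \(\frac{d-\rho}{v}\) shows interception is feasible if and only if \(\frac{d+\rho}{1+v}\le\frac{d-\rho}{v}\), i.e.\ \(v\le\frac{d-\rho}{2\rho}\). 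Hence precisely when \(v>\frac{d-\rho}{2\rho}\) a defender guarding one corridor can never answer a fresh intruder in the other, which is the configuration \cite{bajaj2021competitive} exploit by releasing intruders adaptively on the corridor the online defender has just abandoned while the offline defender, knowing the schedule, pre-positions and collects them; iterating drives the ratio above every constant. I would also note that the choice of \(p_1,p_2\) through \(r\) is \emph{optimal}: any embedded line whose branches diverge strictly below \(r\) has perimeter separation strictly less than \(2\rho\), hence a strictly larger velocity threshold, so the through-root line yields the smallest \(v\) for which the argument applies.

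Finally I would lift the bound: restricting an arbitrary online tree algorithm \(\mathcal{A}\) to inputs supported on \(\{\ell_1,\ell_2\}\) yields an online algorithm for the embedded linear environment, and on such inputs the remainder of the tree offers the defender no additional interception opportunities, so \(c_{\mathcal{A}}(\mathcal{E})\) is at least the competitive ratio forced on the line, namely unbounded. I expect the main obstacle to be two intertwined rigor points rather than any single computation: first, making the reduction airtight---certifying that confining the adversary to the two corridors really collapses the tree dynamics onto the line and that no clever use of the \(\delta\)-fold branching lets the tree defender outperform its linear projection; and second, matching the embedded parameters to the precise hypotheses of the lower bound in \cite{bajaj2021competitive} so that the threshold transfers as exactly \(\frac{d-\rho}{2\rho}\) and not a looser bound. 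Establishing the algorithm-independent ``commitment'' step---pinning down the instant at which the online defender provably forfeits the far corridor---is the technically delicate piece within the adversary's construction.
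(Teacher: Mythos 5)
Your reduction skeleton does match the paper's strategy: the paper also picks two perimeter vertices \(p_i,p_j\) with \(dist(p_i,p_j)=2\rho\) (branches diverging at the root), works with the two corridors of length \(d-\rho\) below them, and uses precisely the comparison between the switching time \(2\rho\) and the intruder lifetime \(\frac{d-\rho}{v}\); your interception computation \(t^{\star}=\frac{d+\rho}{1+v}\le\frac{d-\rho}{v}\iff v\le\frac{d-\rho}{2\rho}\) is a correct (indeed slightly stronger) form of that comparison, and your retraction remark explaining why the rest of the tree cannot help the defender on line-supported inputs is sound.

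The gap is in the one step that actually produces an \emph{unbounded} ratio. Your adversary --- ``release intruders adaptively on the corridor the defender has just abandoned, and iterate'' --- cannot work, because the offline defender is subject to the same speed limit: when \(v>\frac{d-\rho}{2\rho}\) (which, since \(\rho<d\), also implies \(v>\frac{d-\rho}{d+\rho}\)), \emph{no} defender, online or offline, can serve two single releases on opposite corridors that are closely spaced in time, so iterated single releases keep the offline and online capture counts within a constant factor of each other; this style of input is what underlies the paper's \(2\)-competitive lower bound (Theorem 3.2), not the no-finite-\(c\) result. The construction the paper uses (and that the cited linear-environment lower bound also requires) has two ingredients your sketch omits. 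First, a \emph{stream} at \(\ell_i\) with inter-arrival time \(2d\): since any capture inside \(B(p_i)\) requires passing through the cut vertex \(p_i\), an algorithm that never visits \(p_i\) loses the entire stream while offline collects it, so any finitely competitive online defender must eventually reach \(p_i\), at a time the adversary can determine by simulation. Second, a \emph{burst} of \(c+1\) simultaneous intruders at \(\ell_j\), released at exactly that instant. The burst is what breaks the online/offline symmetry: offline collects all \(c+1\) intruders with a single well-timed visit to \(p_j\), while the online defender (stuck at \(p_i\), with \(2\rho>\frac{d-\rho}{v}\)) catches none of them and at most one stream intruder, forcing a ratio of at least \(c+1\) for every candidate constant \(c\). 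Without this device of concentrating unboundedly many intruders at one entrance at an adaptively chosen time --- or a precise black-box invocation of the linear-environment theorem whose proof contains it --- your argument cannot push the ratio past a fixed constant.
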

This first result, whose proof is presented in the Appendix, characterizes parameter regimes for which no online algorithm can be finitely competitive. 
The next result characterizes parameter regimes in which an online algorithm cannot do better than 2-competitive. Notably, the proof technique for this result is unique to the tree, as it relies on having multiple intruder entrances under each perimeter vertex.

\begin{theorem}
    \label{thm:two-comp}
    For an environment \(\mathcal{E}(d, \delta, \rho)\), if \(v \geq \frac{d-\rho}{d+\rho}\), then \(c(\mathcal{E}) \geq 2\).  
\end{theorem}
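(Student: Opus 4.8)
The plan is to prove this matching lower bound by exhibiting, for an \emph{arbitrary} online algorithm $\mathcal{A}$, an input instance $\mathcal{I}$ on which the optimal offline captures at least twice as many intruders as $\mathcal{A}$. Since $c(\mathcal{E}) = \inf_{\mathcal{A}} \sup_{\mathcal{I}} \mathcal{O}(\mathcal{I})/\mathcal{A}(\mathcal{I})$, this forces $c(\mathcal{E}) \ge 2$. Because $\mathcal{A}$ is deterministic, its trajectory in response to any released intruders is fixed, so I may design $\mathcal{I}$ adaptively against the path $\mathcal{A}$ is compelled to follow.

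First I would build the basic adversarial gadget, a \emph{fan-in wave}: release one intruder from a leaf in each of the $\delta$ child-subtrees of a chosen perimeter vertex $p$, all timed to reach $p$ simultaneously. Since these intruders share no common point strictly below $p$, a defender captures the entire wave only by being located at $p$ at the exact arrival instant; sitting on any single incident edge saves at most a $1/\delta$ fraction. This is the tree-specific ingredient: having multiple entrances under one perimeter vertex lets the adversary strip away the defender's positional flexibility and pin a full capture to one vertex at one instant. Scaling the wave to all $\delta^{d-\rho}$ leaves under $p$ makes this ``all-or-a-fraction'' dichotomy as sharp as desired.

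The core instance then uses two waves at maximally separated perimeter vertices $p_1,p_2$, so that $\mathrm{dist}(p_1,p_2)=2\rho$ while a leaf under $p_1$ lies at distance $d+\rho$ from $p_2$. When the intruder lifetime $L=(d-\rho)/v$ satisfies $L<2\rho$, Theorem~\ref{thm:finit-com} already rules out any finite ratio, so I may assume $2\rho \le L \le d+\rho$. The first wave is engineered to force $\mathcal{A}$ deep into the subtree of $p_1$, down toward a leaf, by the moment the second wave (a fan-in at $p_2$, arriving one lifetime $L$ after release) is revealed. The offline algorithm, knowing both waves in advance, instead positions to secure both, either declining the stranding portion of the first wave or scheduling its descent so as to climb back out in time, thereby capturing twice what $\mathcal{A}$ does. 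The hypothesis $v \ge \frac{d-\rho}{d+\rho}$, i.e. $L \le d+\rho$, is exactly the threshold that makes $\mathcal{A}$'s situation hopeless: stranded near a leaf of $p_1$'s subtree, it sits at distance up to $d+\rho \ge L$ from $p_2$ and so cannot reach the second wave within a lifetime, whereas inter-perimeter distances alone, at most $2\rho \le L$, would always be recoverable.

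The main obstacle is precisely this stranding step. Equally-timed constraints at two perimeter vertices pin the online and offline defenders identically and yield no gap, and any threat confined to the perimeter layer keeps every perimeter vertex within distance $2\rho \le L$ and hence reachable after the reveal; so the factor of two \emph{cannot} arise from perimeter-to-perimeter separation in this regime. I must therefore design the first wave so that capturing it in full admits a unique trajectory ending deep (near a leaf) at the reveal time, with the reveal occurring only after $\mathcal{A}$ has irrevocably committed to that descent, while exhibiting a strictly better offline route that sacrifices or reschedules the descent. Finally I would balance the two wave sizes so that neither of $\mathcal{A}$'s pure options, committing fully to the first wave or abandoning it for the second, captures more than half of the offline total, closing the argument at ratio $2$, with the boundary case $L=d+\rho$ resolved through the stated tie-breaking convention.
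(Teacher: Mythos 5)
Your plan correctly identifies the crux of this regime --- with lifetime \(L=(d-\rho)/v\) satisfying \(2\rho \le L \le d+\rho\), threats resolved at the perimeter give no online/offline gap, so the adversary must exploit depth --- but the construction that is supposed to deliver this, the stranding step, is never actually built, and the gadget you do build provably cannot deliver it. The claim that a fan-in wave ``is captured in full only by being located at \(p\) at the exact arrival instant'' is false: the wave can be captured sequentially from deep to shallow. Concretely, take one intruder under each of two children of \(p\), both arriving at \(p\) at time \(T\); a defender standing at the first intruder's leaf at its release instant captures it there, then intercepts the second intruder at depth \((d-\rho)\frac{1-v}{1+v}>0\) below \(p\), strictly before time \(T\), never visiting \(p\) (the same works for the scaled wave, since simultaneously released intruders in one child-subtree coincide after merging and form \(\delta\) clusters). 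Worse for your purposes, every full-capture trajectory of such a wave ends \emph{shallow} --- either at \(p\) at time \(T\), or, in the sequential mode, at depth at most \((d-\rho)\frac{1-v}{1+v}\le \rho(d-\rho)/d<\rho\) shortly before \(T\) --- so the wave cannot strand the defender near a leaf; indeed an online defender that stays inside \(\mathcal{P}(\rho)\) can fully serve any fan-in wave by walking to its apex (it has lead time \(L\ge 2\rho\)) and waiting, after which it sits at a perimeter vertex within \(2\rho\le L\) of every other perimeter vertex and is never out of position for the next reveal. When two waves' schedules conflict so badly that this perching strategy fails, the offline defender cannot attend both apexes either; its only edge is deep, pre-positioned interception --- which is exactly the mechanism your proposal defers to future design. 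That is a genuine gap, not a detail.

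For contrast, the paper realizes that mechanism with just two intruders and no waves. Pick \(p_i,p_j\) with \(\mathrm{dist}(p_i,p_j)=2\rho\), leaves \(\ell_i,\ell_j\) in their branches, and release one intruder at each at time \(d\) (for the threshold speed; for \(v>\frac{d-\rho}{d+\rho}\) the second release is delayed). If the first capture occurs at depth \(y\) below \(p_i\), it happens at time \(d+\frac{(d-\rho)-y}{v}\), after which the defender must travel \(y+2\rho\) to reach \(p_j\) no later than time \(d+\frac{d-\rho}{v}\); this simplifies to \(y+2\rho\le y/v\), i.e.\ \(y\ge \frac{2\rho v}{1-v}\), which equals \(d-\rho\) at \(v=\frac{d-\rho}{d+\rho}\). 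So full capture forces the defender to be at the leaf \(\ell_i\) (or symmetrically \(\ell_j\)) exactly at the release instant. The offline defender pre-positions there; the online defender cannot know which of the at least \(\delta^{d-\rho}\ge 2\) leaves per branch to occupy, so the adversary chooses unoccupied ones and the online algorithm captures at most one of the two intruders, giving ratio \(2\). Your argument becomes a proof if you replace the fan-in waves and the wave-size balancing by this two-intruder instance; as written, the stranding construction is missing and cannot be supplied by the gadget you propose.
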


\begin{proof}
    We first select two perimeter vertices, \(p_i\) and \(p_j\) such that \(dist(p_{i},p_{j}) = 2\rho\). We then show that for two input instances of two intruders each, we can guarantee that only a single intruder can be captured by an online algorithm, while an optimal offline can always capture both. Let \(\ell_{i} \in ent(B(p_{i}))\) and \(\ell_{j} \in ent(B(p_{j}))\).

     We first show the result when \(v=\frac{d-\rho}{d+\rho}\). Consider an input where a single intruder is released at each of \(\ell_i\) and \(\ell_j\) at time \(d\). The only possible method to capture both intruders in this input is for the defender to be located at either of \(\ell_i\) or \(\ell_j\) at time \(d\) (capturing one intruder right away) and then moving immediately to either \(p_j\) (if it first captured the intruder at \(\ell_i\)) or \(p_i\) (if it first captured the intruder at \(\ell_j\)). Since it takes \(d\) times units to reach either of \(\ell_i\) or \(\ell_j\) from \(r\), any algorithm that does not immediately begin moving to either \(\ell_i\) or \(\ell_j\) at the start of the scenario can be at best 2-competitive. However, since an online algorithm cannot know which of the intruder entrances will be \(\ell_i\) and which will be \(\ell_j\), there always exists an alternate input instance when the location it arrives at is incorrect. Meanwhile, an optimal offline algorithm is aware of the correct vertex to move to and can always capture both intruders. 

     For the case when \(v>\frac{d-\rho}{d+\rho}\), we can use a similar input where an intruder is released at \(\ell_i\) at time \(d\) and a second intruder is released at \(\ell_j\) at time \(2d+\rho-\frac{d-\rho}{v}\). The only method to capture this input is to arrive at \(\ell_i\) at time \(d\) (capturing the first intruder) and then immediately moving to \(p_j\) to capture the second intruder. Once again, an online algorithm's lack of knowledge about which intruder entrance will be \(\ell_i\) means that there always exists an input where any given algorithm loses an intruder. 

     In summary, we have described two classes of inputs such that, without prior knowledge of where the intruders will be deployed, no single online algorithm can capture both intruders. Since there exist offline algorithms that capture all intruders in these inputs, no online algorithm can be better than 2-competitive, and the result follows. 
\end{proof}

The next result, proven in the Appendix, shows parameter regimes in which no online algorithm can do better than \(\frac{3}{2}\)-competitive.

\vspace{-10pt}
\begin{theorem}

    \label{thm:not-one-comp}
    For an environment \(\mathcal{E}(d,\delta, \rho)\) such that \(\delta \geq 3\), if the following two conditions hold:
    \begin{equation}\frac{d-\rho}{d + 3\rho} \leq v < \frac{d-\rho}{d+\rho}, \text{ and }\end{equation}
    \begin{equation}\label{eq:eps} d + \rho + 2(d-\rho)\frac{1-v}{1+v} - \frac{2\epsilon v}{1+v}> \frac{d-\rho}{v},\end{equation}
        for $\epsilon = d+3\rho - \frac{d-\rho}{v}$. Then, \(c(\mathcal{E}) \geq \frac{3}{2}\). 
\end{theorem}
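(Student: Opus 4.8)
The plan is to establish the lower bound by exhibiting, for an arbitrary online algorithm \(\mathcal{A}\), an adaptive family of input instances each consisting of exactly three intruders on which an optimal offline algorithm captures all three while \(\mathcal{A}\) captures at most two; this forces \(c_{\mathcal{A}}(\enve) \geq \tfrac{3}{2}\), and since \(\mathcal{A}\) is arbitrary, \(c(\enve) \geq \tfrac{3}{2}\). The hypothesis \(\delta \geq 3\) is used to select three perimeter vertices \(p_1, p_2, p_3\) lying in three distinct subtrees of the root, so that they are pairwise at distance \(2\rho\); I would fix branch entrances \(\ell_k \in ent(B(p_k))\) and exploit the fact that each branch \(B(p_k)\) contains \(\delta^{d-\rho}\) entrances. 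This entrance multiplicity, together with the three root-disjoint branches, is exactly what the tree provides beyond the linear environment of \cite{bajaj2021competitive}, and it is what prevents the online defender from hedging.

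The mechanism driving the bound is a forced, uninformed commitment. I would release the first intruder in \(B(p_1)\) at a time chosen so that it can be captured over a whole interval of interception points along its descent toward \(p_1\); each such choice leaves the defender at a different position-time pair afterward. The second and third intruders are then released later, in \(B(p_2)\) and \(B(p_3)\), so that any online algorithm must commit to \emph{how} (and hence \emph{where}) it intercepts the first intruder before it can observe the entrances of the remaining two. A non-causal offline algorithm instead knows all three entrances from the outset and selects the interception point for the first intruder that best positions it to sweep through \(B(p_2)\) and then \(B(p_3)\). Using its knowledge of \(\mathcal{A}\) and the freedom afforded by the two remaining branches and their entrances, the adversary places the second and third intruders so that \(\mathcal{A}\)'s committed post-capture position falls short of at least one of them, while the offline sweep reaches both.

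The quantitative core is a head-on interception calculation along a branch: a defender entering a branch at its perimeter vertex and a descending intruder close at relative speed \(1+v\), which produces interception depths carrying the factor \(\tfrac{1-v}{1+v}\) appearing in \eqref{eq:eps}. I would first verify that the offline sweep captures all three, the crucial step being that the second intruder can be intercepted at or above the perimeter when the defender departs from the first intruder's leaf; a short computation shows this interception lands above \(p_2\) exactly when \(v < \tfrac{d-\rho}{d+\rho}\), the upper bound of the first displayed condition. The deficit \(\epsilon = d + 3\rho - \tfrac{d-\rho}{v}\), nonnegative precisely because \(v \geq \tfrac{d-\rho}{d+3\rho}\), measures how far the online defender's most direct response \(\ell_1 \to p_2 \to p_3\) falls short of reaching the third intruder in time; inequality \eqref{eq:eps} is then the certificate that even the time-optimal interception route, and not merely this direct route, leaves \(\mathcal{A}\) unable to reach the third intruder from its forced position, so that it must forfeit one of the three.

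I expect the main obstacle to be the adaptive case analysis rather than any individual calculation. One must show that for \emph{every} online response -- every interception point \(\mathcal{A}\) might choose for the first intruder, and every branch it might pre-position toward -- the adversary still has a placement of the second and third intruders that is simultaneously captured by the offline sweep yet missed by \(\mathcal{A}\). Making this robust requires arguing that the set of entrances the offline route can still defend strictly dominates those reachable from \(\mathcal{A}\)'s committed position, which is where the entrance multiplicity and the choice \(\delta \geq 3\) must be invoked carefully, and then confirming that the resulting timing constraints collapse exactly onto the bounds in condition (1) and the certificate \eqref{eq:eps}, with no remaining slack for the online defender to recover the third intruder.
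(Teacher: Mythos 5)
Your skeleton matches the paper's proof: three perimeter vertices pairwise at distance \(2\rho\) in distinct root subtrees (the role of \(\delta \geq 3\)), a three-intruder input family, an indistinguishability argument yielding offline \(3\) versus online \(2\), and the head-on interception calculus producing the \(\frac{1-v}{1+v}\) terms. But your construction inverts the temporal mechanism, and the inversion breaks the intended role of hypothesis \eqref{eq:eps}. In the paper's input the \emph{early} intruder (released at time \(d+\rho\), or \(d+\rho+\epsilon\) in the general case) sits in the middle branch \(B(p_2)\), while the two intruders released \emph{later} sit in the outer branches; the successful offline route captures one of the \emph{later} intruders at its leaf at the very instant of its release, then the early intruder at \(p_2\), then the remaining one at \(p_3\), the last two exactly at their loss deadlines. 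The online defender fails because being at the correct leaf at time \(d+3\rho\) is a commitment that must be made before anything in the input distinguishes that leaf from the many other candidates. Hypothesis \eqref{eq:eps} is used for something else entirely: it certifies that the \emph{other} candidate order---capture the early intruder at its leaf at release, travel \(d+\rho\) to the adjacent perimeter vertex, capture the next intruder head-on (the source of the \(2(d-\rho)\frac{1-v}{1+v}\) and \(\frac{2\epsilon v}{1+v}\) terms), then travel \(2\rho\) to the third---misses a deadline, no matter who executes it.

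That forbidden order is precisely the route you assign to your \emph{offline} defender: capture the first-released intruder at its leaf in \(B(p_1)\), then sweep \(B(p_2)\) and \(B(p_3)\). With the tight timing that makes the input hard for the online (later intruders released roughly \(2\rho\), or \(2\rho\) with the \(\epsilon\) shift, after the first), your input coincides with the paper's up to relabeling of branches, but then \eqref{eq:eps}---a hypothesis of the theorem---implies your offline captures at most two intruders, and the lower bound collapses. If instead you delay intruders 2 and 3 until your offline route becomes feasible, the online defender gains the same slack: its only handicap is that it intercepts intruder 1 head-on after seeing its entrance rather than at the leaf, starting from a position it can always keep within \(\rho\) of every perimeter vertex by waiting at the root. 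Ruling out \emph{all} of its responses then requires feasibility inequalities built from terms such as \(\rho + 2\frac{(d-\rho)-\rho v}{1+v}\), which are not implied by condition (1) and \eqref{eq:eps}; your claim that the analysis ``collapses exactly onto'' those conditions is exactly what fails. So the adaptive case analysis you flag as the main obstacle is not deferred bookkeeping---it is where the construction and the stated hypotheses come apart. The repair is the paper's arrangement: make the two later intruders the ones requiring a blind leaf-capture, let the offline take one of \emph{them} first, and reserve \eqref{eq:eps} solely for excluding the capture-the-early-intruder-first order.
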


\vspace{-6pt}
Although this result shows only \(\frac{3}{2}\)-competitiveness, it bounds the competitiveness of online algorithms for values of \(v\) that have not previously been explored, using inputs that are unique to the tree environment. 



\section{Algorithms}
\label{section:algorithms}
We now present online algorithms for the defending vehicle and analyze their competitiveness.
\subsection{Sweeping Algorithm}

In this algorithm, the defending vehicle continuously traverses every edge in the tree in a fixed order. Finding such a walk is analogous to finding the shortest length closed walk with starting vertex \(r\) that incorporates every edge of the tree. This walk can be found easily by following the path induced by a \emph{depth-first-search} on the tree beginning at \(r\). We will assume that the Sweeping algorithm follows the path of a depth-first-search that chooses the left-most unvisited vertex when it must make a decision.

\vspace{-10pt}
\begin{lemma}
    \label{lem:sweep-length}
    The length of one iteration of a Sweeping algorithm on an environment with depth \(d\) and branching factor \(\delta\) is \(2((\frac{\delta^{d+1} - 1}{\delta-1}) -1)\). 
\end{lemma}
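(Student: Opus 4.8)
The plan is to reduce the length computation to a simple edge count by invoking the standard structural fact that a depth-first closed walk on a tree traverses each edge exactly twice---once while descending into a subtree and once while ascending back out of it. The key observation is that the tree is acyclic, so the only way for the walk to reach the descendants hanging off a given edge and subsequently return to \(r\) is to cross that edge in both directions; conversely, a depth-first traversal never crosses an edge more than these two times. Since every edge carries unit weight under \(w\), the total length of one iteration is therefore exactly \(2\,|E(\enve)|\), and the entire problem collapses to counting the edges of the full tree.

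Next I would count the vertices level by level. Starting from the root at distance \(0\), every vertex of distance strictly less than \(d\) has exactly \(\delta\) children, so the number of vertices at distance \(k\) from \(r\) is \(\delta^{k}\) for each \(k \in \{0,1,\dots,d\}\). Summing this finite geometric series gives
\[
|V(\enve)| \;=\; \sum_{k=0}^{d} \delta^{k} \;=\; \frac{\delta^{d+1}-1}{\delta-1}.
\]
Because a full tree is connected and acyclic, the number of edges satisfies \(|E(\enve)| = |V(\enve)| - 1\). Substituting the vertex count and multiplying by two then yields the claimed length \(2\!\left(\tfrac{\delta^{d+1}-1}{\delta-1} - 1\right)\).

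The only step that is not purely mechanical is justifying that the chosen depth-first walk does indeed realize exactly two traversals per edge while also forming a closed walk rooted at \(r\). I would establish this by induction on the depth: a single-vertex tree needs a walk of length \(0\), and a tree of depth \(d\) is obtained by attaching \(\delta\) depth-\((d-1)\) subtrees to the root, each reached by descending one edge from \(r\), exploring the subtree by the inductive walk, and ascending back. This adds \(2\) length units per child edge plus the inductive length of each subtree, which is exactly the recursion generating the geometric sum above. I expect this induction---rather than any arithmetic---to be the main point requiring care, since it is what pins down both that the walk is closed and that no edge is ever crossed a third time.
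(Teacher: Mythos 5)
Your proposal is correct and follows essentially the same route as the paper, which simply observes that each edge must be traversed exactly twice and multiplies by the edge count \(\left(\frac{\delta^{d+1}-1}{\delta-1}\right)-1\). Your write-up merely fills in the details the paper leaves implicit (the geometric-series vertex count, \(|E| = |V|-1\), and the induction certifying the two-traversals-per-edge property of the depth-first closed walk).
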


This result follows from the fact that each edge in the tree must be traversed exactly twice, and  yields the following.

\vspace{-6pt}

\begin{theorem}
    A Sweeping algorithm is 1-competitive if 
    \begin{equation}
        \label{equ:sweep-bound}
        v \leq \frac{d-\rho}{2((\frac{\delta^{d+1} - 1}{\delta-1}) -1) - (d-\rho)}
    \end{equation}
    Otherwise, it is not \(c\)-competitive for any finite \(c\).
\end{theorem}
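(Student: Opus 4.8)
The plan is to exploit the fact that the Sweeping algorithm is \emph{oblivious}: its trajectory is a fixed periodic walk of length \(L := 2\left(\left(\frac{\delta^{d+1}-1}{\delta-1}\right)-1\right)\) (Lemma~\ref{lem:sweep-length}) that does not depend on the intruders. Consequently, whether a given intruder is captured depends only on its own release time and entrance, not on any other intruder. So it suffices to analyze a single intruder: if condition~\eqref{equ:sweep-bound} guarantees that \emph{every} individual intruder is captured, then on any input the defender captures all intruders and is trivially 1-competitive (an offline defender can capture at most all); conversely, exhibiting one intruder that is never captured while an offline defender captures it already rules out any finite competitive ratio.

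First I would record the intruder's lifetime. An intruder released at a leaf at time \(t_0\) travels the \(d-\rho\) edges to its perimeter vertex at speed \(v\), so it is lost at time \(t_0+\tau\) with \(\tau=\frac{d-\rho}{v}\); the last point at which it can be captured is the perimeter vertex \(p\) above it, at time \(t_0+\tau\). For the sufficiency direction I would identify the worst-case release: an intruder placed at the \emph{leftmost} leaf of a branch \(B(p)\) an instant after the defender departs that leaf. Because this leaf is the first one the DFS reaches after entering \(B(p)\) (a straight descent of length \(d-\rho\) from \(p\)), and because the event ``defender at \(p\) about to descend into \(B(p)\)'' recurs exactly once per period, the defender does not return to \(p\) heading back down this path until time \(t_0+(L-(d-\rho))\). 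At that moment it descends the leftmost path, which is precisely the intruder's path, and meets the slower, upward-moving intruder head-on. Capture therefore occurs iff \(L-(d-\rho)\le\tau\), and substituting \(\tau=\frac{d-\rho}{v}\) and rearranging yields exactly the bound~\eqref{equ:sweep-bound}.

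To complete sufficiency I would argue that this leftmost-leaf release is the global worst case. For any other leaf, the defender's first return to the correct perimeter vertex is no later; and although it must then traverse part of \(B(p)\) to reach the intruder's path, the slack suffices, since the per-child-subtree sweep cost \(\beta=2\frac{\delta^{d-\rho}-1}{\delta-1}\) satisfies \(\beta\ge 2(d-\rho)\), compensating for the extra navigation so that the meeting still occurs by \(t_0+\tau\) whenever \(L-(d-\rho)\le\tau\). For the ``otherwise'' direction, suppose \eqref{equ:sweep-bound} fails, i.e.\ \(L-(d-\rho)>\tau\). Releasing a single intruder at the leftmost leaf just after the defender leaves it, the argument above shows the Sweeping defender never reaches the intruder's path before time \(t_0+\tau\) and hence never captures it, whereas an offline defender that simply waits at that leaf captures it on release. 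Thus \(\mathcal{A}(\mathcal{I})=0<\mathcal{O}(\mathcal{I})\), forcing an infinite competitive ratio (repeating the release once per period makes the gap arbitrarily large).

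The main obstacle is the worst-case identification in the sufficiency direction: verifying rigorously that no leaf and no release phase is harder than the leftmost-leaf-just-missed scenario, and in particular confirming that for deep, non-leftmost leaves the defender still intercepts the intruder on the shared upper portion of its path before the intruder escapes. This requires tracking, for a general leaf, the exact DFS time at which the defender re-enters each edge of the intruder's path and checking it against the intruder's depth, where the inequality \(\beta\ge 2(d-\rho)\) provides the needed margin. Everything else reduces to the one-line rearrangement of \(L-(d-\rho)\le\tau\) into~\eqref{equ:sweep-bound}.
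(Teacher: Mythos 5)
The paper states this theorem without any proof (only Lemma~\ref{lem:sweep-length} plus a remark), so the comparison must be on the merits of your argument; unfortunately, both directions have genuine gaps, and the sufficiency gap is fatal to this proof strategy. Your decoupling observation (the sweep is oblivious, so it suffices to analyze one intruder) is sound, and your head-on computation for the leftmost leaf does recover bound~\eqref{equ:sweep-bound}. But the step you flag as the "main obstacle" --- that the just-missed leftmost leaf is the global worst case --- is not merely unproven; it is false, and the slack argument via \(\beta \ge 2(d-\rho)\) cannot repair it, because the failure is not one of travel-time budget but of geometry. For a non-leftmost leaf \(\ell\) of a branch \(B(p)\), the defender does not return to \(\ell\) by descending the intruder's path from \(p\); it approaches through a descending staircase of excursions into sibling subtrees, and the intruder can cross each junction vertex of its path precisely while the defender is strictly inside the sibling subtree hanging there. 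Once the intruder slips above the defender this way, the only remaining capture mechanism is the tail chase during the defender's final ascent from \(\ell\), which succeeds only if \(v \le (d-\rho)/(L+(d-\rho))\), where \(L = 2((\delta^{d+1}-1)/(\delta-1)-1)\) --- strictly stronger than \eqref{equ:sweep-bound}. Concretely, take \(d=6\), \(\delta=2\), \(\rho=1\), so \(L=252\) and \eqref{equ:sweep-bound} permits \(v=5/247\). Let \(\ell\) be the rightmost leaf of \(B(p)\), write its path to \(p\) as \(\ell, u_1, u_2, u_3, u_4, p\), and release the intruder just after the defender leaves \(\ell\) at time \(0\). The defender ascends to \(p\) by time \(5\), is outside \(B(p)\) until time \(133\), spends \([133,195]\) in \(p\)'s left child subtree, \([196,226]\) in \(u_4\)'s left subtree, \([242,248]\) in \(u_2\)'s left subtree, and reaches \(\ell\) again only at time \(252\). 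The intruder passes \(u_4\) at time \(4/v \approx 197.6\) (defender inside \(u_4\)'s left subtree) and reaches \(p\) at time \(5/v = 247\) (defender inside \(u_2\)'s left subtree): it is lost although \eqref{equ:sweep-bound} holds with equality, and the same evasion works for every \(v \in (5/257, 5/247]\). So the leftmost-leaf scenario is not the worst case, and the sufficiency direction cannot be closed along these lines (indeed this calls the stated bound itself into question).

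The necessity direction has a parallel gap: your claim that, once \eqref{equ:sweep-bound} fails, the just-missed leftmost-leaf intruder is "never captured" ignores that the defender re-crosses the intruder's path repeatedly while sweeping the sibling subtrees right after the release. For example, with \(\delta=2\) and any \(v \in (1/3,1/2]\), the defender goes leaf \(\to\) parent \(\to\) sibling leaf \(\to\) parent in \(3\) time units; when it then ascends the edge above the parent, the intruder (which passed the parent at time \(1/v < 3\)) sits on that edge with head start \(3v-1 \le 1/2\), and the unit-speed defender catches it since \((3v-1)/(1-v) \le 1\). So a single release scenario cannot certify an unbounded ratio for all \(v\) violating \eqref{equ:sweep-bound}; different velocity regimes require differently timed releases (e.g.\ ones keeping the intruder strictly ahead of the sweep). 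In short, the per-intruder reduction and the algebra are fine, but the worst-case identification in the sufficiency direction fails outright, and the evasion claim in the necessity direction fails for moderate speeds.
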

\vspace{-10pt}
\begin{remark}
    This result characterizes the fact that all intruders in an input can certainly be captured, even in an online setting, if the intruder velocity is sufficiently small. 
\end{remark}


\subsection{Stay At Perimeter Algorithm}
We next give an adaptation of an algorithm first presented in \cite{bajaj2022competitive} that allows for significantly more permissive parameter regimes at the cost of an exponential competitive ratio, Stay at Perimeter (SaP). This algorithm breaks up time into epochs of duration in the interval \([2\rho , 4\rho]\). During each epoch, the defending vehicle either waits at one of the perimeter vertices of the environment, capturing any intruders that are headed toward that location or travels to a new perimeter vertex and waits there. 


The SaP algorithm is detailed in Algorithm \ref{alg:snp} and is illustrated in Figure~\ref{fig:SaP}. We provide a brief overview of the terminology necessary to understand the algorithm. Let \(p_{1}, \dots , p_{\delta^{\rho}}\) be the perimeter vertices in the environment, and let \(P_{1}, \dots P_{\delta^{\rho}}\) be the subtrees given by \(B(p_{1}), \dots , B(p_{\delta^{\rho}})\). For a subtree \(P_k\), the regions \(S^{1}_k\), \(S^{2}_k\), and \(S^{3}_k\) are defined as all locations in \(P_k\) whose distance from \(p_k\) are in the intervals \([0,2\rho v]\), \([2\rho v, 4\rho v ]\), and \([4\rho v, 6\rho v ]\) respectively. We denote the number of intruders present in each of these regions as \(|S^{h}_k|\) for \(h\in \{1,2,3\}\). Suppose, the defender is located at some perimeter vertex \(p_j\) at the beginning of an epoch. The algorithm finds the value \(\eta_{j}^{k}\) for each subtree \(P_k\) by performing the following calculation:

   
      \[ \eta_{j}^{k} := 
    \begin{cases}
        |S_{k}^{2}| + |S_{k}^{3}|, & \text{if } k \neq j \\
        |S_{k}^{1}| + |S_{k}^{2}| + |S_{k}^{3}|, & \text{if } k=j.
    \end{cases}\]

That is, if \(P_k\) is the subtree the defender waited at during the previous epoch (\(k=j\)), the value of \(\eta_{j}^{k}\) is equal to the total number of intruders in all three regions of the subtree \(P_k\). Otherwise (\(k\neq j\)), \(\eta_{j}^{k}\) is equal to the number of intruders in just the second and third regions of \(P_k\). This accounts for the (worst case) travel time the defender will incur by switching its location. After \(\eta_{j}^{k}\) is calculated for each subtree, the algorithm calculates vertex \(k^* = \argmax_{k\in\{1, \dots, \delta^\rho\}}\{\eta^{1}_{i}, \dots , \eta^{\delta^{\rho}}_{i}\}\) i.e., the vertex of the subtree which corresponds to the greatest number of intruders that would be captured in the next epoch. If \(k^{*}\) is the same vertex as \(p_j\) the defender does not switch locations. In the case that \(k^{*} \neq p_j\), the algorithm determines if \(|S_{k^{*}}^2| \geq |S_{j}^1|\). If so, then the defender moves to \(k^{*}\) and captures \(|S_{k^{*}}^2|\). Otherwise, the defender stays at \(p_j\).



\vspace{-4pt}

\begin{algorithm}[h]
    
    \caption{Stay at Perimeter}
    \begin{algorithmic}[1]
    \label{alg:snp}
    \STATE Defender is at vertex $r$ and waits until time \(2\rho\)
    \STATE $k^*$ = $\argmax_{k\in\{1, \dots, \delta^\rho\}}\{\eta^{1}_{i}, \dots , \eta^{\delta^{\rho}}_{i}\}$, set \(P_{i} = P_{k^{*}}\)
    \STATE Move to vertex $p_{k^*}$ and wait until time $6\rho$
    \FOR{\emph{each} epoch}
       \STATE $k^*$ = $\argmax_{k\in\{1, \dots, \delta^\rho\}}\{\eta^{1}_{i}, \dots , \eta^{\delta^{\rho}}_{i}\}$ 
       \STATE \(P_{n} = P_{k^{*}}\)
       \IF{\(P_{n} \neq P_{i}\) \AND \(|S_{n}^{2}| \geq |S_{i}^{1}|\)}
            \STATE Move to vertex \(p_{k^*}\)
            \STATE Wait for \(2\rho \) time units, capturing \(|S_{n}^{2}|\)
            \STATE \(P_{i} = P_{n}\)
       \ELSE
            \STATE Remain at current vertex
            \STATE Wait for \(2\rho \) time units, capturing \(|S_{i}^{1}|\)
       \ENDIF
    \ENDFOR    
    
    \end{algorithmic}
    
\end{algorithm}
\vspace{-12pt}

\begin{figure}
    \centering
    \includegraphics[width=0.48\textwidth]{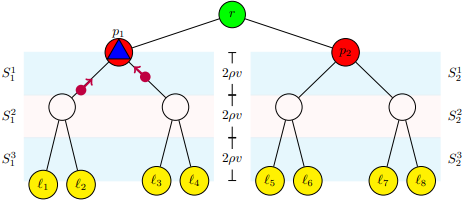}
    \caption{Breakdown of an environment as perceived by the Stay at Perimeter (SaP) algorithm. Here SaP considers the two branches rooted at \(p_1\) and \(p_2\). The regions of the subtrees considered in each epoch are highlighted. As long as the defender is located at \(p_1\) all intruders from the left subtree will be captured.}
    \label{fig:SaP}
    \vspace{-10pt}
\end{figure}

\begin{theorem}
    For any environment that satisfies  \(v \leq \frac{d-\rho}{6\rho}\), Stay at Perimeter is \(\frac{3 \cdot \delta^\rho - 1}{2}\)-competitive.
\end{theorem}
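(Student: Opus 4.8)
The plan is to prove the guarantee by a charging argument that maps every intruder captured by the optimal offline algorithm onto an online epoch, and then bounds, per epoch, how much offline harvesting a single online capture must account for. I would begin by extracting the geometric content of the hypothesis \(v \le \frac{d-\rho}{6\rho}\). Since this is equivalent to \(6\rho v \le d-\rho\), the three shells \(S^1_k, S^2_k, S^3_k\) of every subtree \(P_k\) (which has depth \(d-\rho\)) are well defined and disjoint, and every intruder aimed at \(p_k\) passes through \(S^3_k\), then \(S^2_k\), then \(S^1_k\) before reaching \(p_k\), spending exactly \(2\rho\) time units in each shell because a shell has radial width \(2\rho v\) and the intruder speed is \(v\). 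Simultaneously, any two perimeter vertices are at distance at most \(2\rho\), so the defender can relocate between any \(p_i\) and \(p_n\) within a single \(2\rho\) block. I would use these two facts to verify that the commitments in Algorithm~\ref{alg:snp} always succeed: in a staying epoch (duration \(2\rho\)) every intruder of \(S^1_i\) reaches \(p_i\) while the defender waits there and is captured, and in a moving epoch (duration \(4\rho\)) the intruders that occupied \(S^2_n\) at the epoch start have advanced into the innermost band of \(p_n\) exactly as the relocating defender arrives, and are then captured during the trailing wait.

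Next I would lower-bound the online harvest per epoch. Because each epoch the algorithm commits to \(k^\ast = \argmax_k \eta^k\) and then captures the corresponding shell in full by the previous step, the online gain in an epoch equals \(|S^1_{k^\ast}|\) when it stays and \(|S^2_{k^\ast}|\) when it moves, and in either case this quantity is the maximum, over the \(\delta^\rho\) subtrees, of the relevant shell counts. Maximality then yields that the online gain is at least the average of those counts over all \(\delta^\rho\) subtrees, which is the pivot that converts a global tally of capturable intruders into a per-subtree bound of order \(1/\delta^\rho\).

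The crux is the offline bound. I would charge each offline-captured intruder to the unique online epoch during which it crosses into, or resides in, the innermost shell \(S^1\) of its target subtree; the velocity hypothesis guarantees this residency lasts \(2\rho\) time units, so it meets at least one and at most two epochs, and a consistent tie-breaking rule at epoch boundaries makes the charge single-valued. Within a fixed online epoch of real-time length at most \(4\rho\), the intruders that can enter some \(S^1_k\) are exactly those lying, at the epoch start, in the outer bands of the various subtrees, so the offline harvest charged to this epoch is bounded by a sum of shell counts across all \(\delta^\rho\) subtrees. Combining this sum with the greedy maximality of the online gain from the previous step, and accounting separately for the current subtree whose \(S^1\) is already credited to the online side (this is what removes one unit, turning \(3\delta^\rho\) into \(3\delta^\rho-1\)) and for the fact that the guaranteed online capture recurs on the shorter \(2\rho\) scale while the charged shells span the \(4\rho\) scale (the source of the denominator \(2\)), yields the per-epoch inequality in which the offline count charged to an epoch is at most \(\frac{3\delta^\rho-1}{2}\) times the online gain in that epoch. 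Summing over all epochs gives \(\mathcal{O}(\mathcal{I}) \le \frac{3\delta^\rho-1}{2}\,\mathcal{A}(\mathcal{I})\).

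The step I expect to be the main obstacle is making the charging map simultaneously well defined, injective, and exhaustive across the variable-length epochs: I must guarantee that an intruder straddling a shell boundary at the instant an epoch changes from staying to moving (or vice versa) is charged to exactly one epoch, that no offline-captured intruder escapes the charging because its \(S^1\)-residency fell in a transit interval, and that the special accounting of the current subtree's \(S^1\) is consistent with the greedy comparison. Pinning down these boundary cases---rather than the geometry of the shells, which the first step settles cleanly---is where the constant \(\frac{3\delta^\rho-1}{2}\) is actually earned, and a careless treatment would give only the weaker bound \(2\delta^\rho\) or \(3\delta^\rho\).
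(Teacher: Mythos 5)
Your opening paragraph is sound and, in substance, reproduces the correspondence the paper itself sets up: the hypothesis \(6\rho v \le d-\rho\) makes the three shells well defined, an intruder spends \(2\rho\) time units per shell, any two perimeter vertices are within distance \(2\rho\), and hence every commitment of Algorithm~\ref{alg:snp} succeeds (capture of \(|S^1_i|\) while staying, capture of \(|S^2_n|\) after relocating). The paper stops there and converts exactly this correspondence (perimeter vertices \(\leftrightarrow\) resting points, branches \(\leftrightarrow\) sectors, \(D \leftrightarrow 2\rho\)) into a reduction to Lemmas IV.5 and IV.6 of \cite{bajaj2022competitive}, inheriting the ratio; you instead attempt to prove the charging lemma from scratch, which is a legitimate route, but the crux step you propose is false.

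The gap is your per-epoch inequality. You charge every offline capture to the single online epoch in which that intruder resides in the \(S^1\) shell of its target subtree, and then claim the offline count charged to an epoch is at most \(\frac{3\delta^\rho-1}{2}\) times the online gain \emph{in that same epoch}. This fails structurally, not just at tie-breaking boundaries. The greedy comparison is over \(\eta^k = |S^2_k|+|S^3_k|\), a sum of two shells, so selecting \(k^\ast\) only guarantees that the captures over the \emph{next two} epochs (first \(S^2_{k^\ast}\), then the former \(S^3_{k^\ast}\), which has filtered inward) dominate competing \(\eta\)-values; it guarantees nothing about the single-epoch gain \(|S^2_{k^\ast}|\), which moreover need not equal \(\max_k |S^2_k|\) as you assert. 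Concretely: let the defender sit at \(p_i\) with \(|S^1_i|=0\) and a burst of \(M\) intruders in \(S^2_i\), and let subtree \(n\) have \(|S^2_n|=1\) and a burst of \(M\) in \(S^3_n\). Then \(\eta^n = M+1 > M = \eta^i\), the move condition \(|S^2_n|\ge|S^1_i|\) holds, so SaP relocates and its gain in this epoch is \(1\), while the \(M\) intruders abandoned in \(S^2_i\) reach \(p_i\) during the \(4\rho\) epoch and are all captured by an offline defender parked at \(p_i\); the per-epoch ratio is at least \(M\), unbounded. The global ratio is rescued only in the \emph{following} epoch, when the \(M\) intruders now in \(S^1_n\) are captured --- which is precisely why the imported Lemma IV.5 amortizes over two consecutive captured intervals (``every two consecutive intervals captured account for \(3(\delta^\rho-1)\) lost intervals''), yielding at most \(2+3(\delta^\rho-1)=3\delta^\rho-1\) intruders per \(2\) captured. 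This also corrects your accounting of the constants: the denominator \(2\) comes from the pair of consecutive captured shells, not from the \(4\rho\) versus \(2\rho\) epoch lengths, and the \(-1\) comes from \(2+3(\delta^\rho-1)\), not from excluding the current subtree's \(S^1\). To repair the proof you must replace the per-epoch charge with an amortized charge onto pairs of consecutive captured shells, together with the comparison inequality that each lost shell count is dominated by a captured shell count it was (transitively) compared against --- i.e., reconstruct the argument the paper imports wholesale.
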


\begin{proof} To prove this result, we will establish a one to one correspondence between the Stay at Perimeter (SaP) algorithm and the Stay Near Perimeter (SNP) algorithm from \cite{bajaj2022competitive} with the following modifications to its parameters. The \(n_s\) resting points of SNP become the \(\delta^{\rho}\) perimeter vertices of the full tree environment for SaP.  The effect of positioning the defender at a perimeter vertex in the full tree setting is equivalent to positioning a defender at a resting point in the conical setting as it prevents the loss of any intruder from the branch rooted at that vertex so long as the defender is there. The need for a capture radius around the defender is eliminated in the tree environment as the defender needs to only occupy a single point-like location to block off that section of the perimeter from losses rather than a region like in the cone. The sectors considered by SNP then become the branches rooted at each perimeter vertex for SaP. The distance \(D\) seen in the description of SNP is equal to \(2\rho\) in the full tree. This is because if \(p_i\) and \(p_j\) are the perimeter vertices in a full tree environment \(2 \leq dist(p_{i},p_{j}) \leq 2\rho\). For this reason, the time intervals of duration \(Dv\) for the conical environment are equivalent to the \(2\rho v\) time intervals described above. Finally, as the full tree environments considered here are of some whole number depth \(d\) rather than the radius of 1 used in \cite{bajaj2022competitive}, our constraint for the existence of the intervals becomes \(\frac{d-\rho}{v} \leq 6\rho\).

    With these modifications, the results of Lemmas IV.5 and IV.6 from \cite{bajaj2022competitive} also become applicable to SaP. This is because these results do not rely on any specific property of the conical environment. Instead, they only depend on the distance \(D\), the number of sectors/resting points \(n_s\), and the comparisons made by SNP. As SaP has an equivalent notion of \(D\) and \(N_s\) and makes the same decisions as SNP based on these notions, we can apply these results to our analysis. This gives us that: every two consecutive intervals captured by SaP account for \(3(\delta^\rho -1)\) lost intervals, and every interval lost by SaP is accounted for by some captured interval. Given that these results hold, we have that Stay at Perimeter is \(\frac{3\cdot \delta^\rho -1}{2}\)-competitive when \(v \leq \frac{d-\rho}{6\rho}\).
\end{proof}

\medskip

The parameter regimes under which the SaP algorithm can perform do not depend on the branching factor \(\delta\) of the environment, instead it only requires that the previously mentioned regions are well formed. However, this is traded for a competitive ratio dependent on the number of perimeter vertices in the environment which does depend exponentially on \(\delta\). In the next algorithm, we will examine a strategy that seeks to strike a balance between sweeping the entire tree and only waiting at the perimeter vertices.

\subsection{Compare and Subtree Sweep Algorithm}
We now give a new algorithm that sweeps only a portion of the environment, allowing for a more permissive parameter regime. The Compare and Subtree Sweep algorithm (CaSS) takes a single additional parameter, an integer sweeping depth \(1 \leq s \leq \rho\), which determines both the competitiveness of the algorithm as well as the parameter regimes under which that competitiveness can be achieved. The algorithm breaks up time into epochs; during each epoch one of the \(\delta^s\) subtrees rooted at a vertex of distance \(s\) from the root is swept using the previously described sweeping method.

Compare and Subtree Sweep is defined in Algorithm \ref{alg:CaSS} and is summarized as follows. Let \(m_{1} , \dots m_{\delta^s}\) be the set of vertices of distance \(s\) from the root vertex \(r\). Now consider the subtrees rooted at each of these vertices. Each of these subtrees contain an equal, positive number of perimeter vertices. For each vertex \(m_k\), the capture region \(M_k\) is defined as all locations in the subtree rooted at \(m_k\) whose distance from a perimeter vertex is at least \(\frac{1}{2}(d-\rho)\). We denote the number of intruders present in \(M_k\) as \(|M_{k}|\). At the beginning of each epoch, the defending vehicle is located at vertex \(r\), where it identifies the capture region with the greatest number of intruders \(M_{*}\) (breaking any ties by choosing the left most region). The defender then moves to the root vertex of the subtree containing \(M_*\), denoted \(m_*\), and carries out the Sweeping algorithm on the subtree starting at vertex \(m_{*}\). The vehicle then returns to \(r\), and the next epoch begins. A possible path the defender might follow in an epoch is illustrated in Figure~\ref{fig:CaSS}.

\begin{figure}[h]
    \centering
    \includegraphics[width=0.45\textwidth]{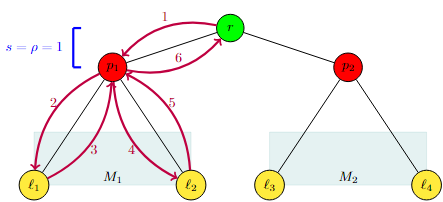}
    \caption{The Compare and Subtree Sweep algorithm will follow the path in purple when \(s=1\) and \(|M_{1}| > |M_{2}|\) (shown as the shaded regions). As \(\rho=1\) in this environment, there is only a single valid value for \(s\).} \label{fig:CaSS}
    \vspace{-12pt}
\end{figure}
\vspace{-6pt}
\begin{lemma}
    \label{lemma:CaSS-Conditions}
    For any environment such that,
    \[4v\left ( s+\frac{\delta^{d-s+1}}{\delta - 1} -1 \right ) \leq d-\rho,\]
    \begin{itemize}
        \item Every intruder that lies in \(M_{*}\) at the beginning of an epoch \(k\) is captured by Compare and Subtree Sweep with sweeping depth \(s\), and  
        \item Any intruder that enters the environment during the course of an epoch \(k\) is either captured during epoch \(k\) or is located in a capture region during epoch \(k+1\).
    \end{itemize}
\end{lemma}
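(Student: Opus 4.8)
The plan is to reduce both bullet points to a single displacement estimate that the hypothesis is engineered to guarantee, so I would first pin down the length of one epoch. The subtree rooted at \(a_*\) is a full tree of depth \(d-s\), so by Lemma~\ref{lem:sweep-length} its sweep costs \(\mathrm{SL}=2\bigl(\frac{\delta^{d-s+1}-1}{\delta-1}-1\bigr)\) time units; adding the two traversals of the length-\(s\) path between \(r\) and \(a_*\) gives an epoch duration \(T_{\mathrm{ep}}=2s+\mathrm{SL}=2\bigl(s-1+\frac{\delta^{d-s+1}-1}{\delta-1}\bigr)\). The key algebraic identity is \(s-1+\frac{\delta^{d-s+1}}{\delta-1}=\tfrac12 T_{\mathrm{ep}}+\frac{1}{\delta-1}\ge \tfrac12 T_{\mathrm{ep}}\), so the assumed inequality \(4v\bigl(s+\frac{\delta^{d-s+1}}{\delta-1}-1\bigr)\le d-\rho\) implies \(2vT_{\mathrm{ep}}\le d-\rho\), i.e. \(vT_{\mathrm{ep}}\le\frac{d-\rho}{2}\). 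Everything follows from this: during one epoch an intruder advances by at most \(vT_{\mathrm{ep}}\le\frac{d-\rho}{2}\) toward its perimeter vertex.

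I would dispatch the second bullet first, as it is pure bookkeeping. Every intruder enters at a leaf, hence at distance \(d-\rho\) below its (unique ancestral) perimeter vertex. If it enters during epoch \(k\) and is not captured in epoch \(k\), then the elapsed time from its entry to the start of epoch \(k+1\) is at most the remaining duration of epoch \(k\), hence at most \(T_{\mathrm{ep}}\); by the displacement bound it has closed at most \(vT_{\mathrm{ep}}\le\frac{d-\rho}{2}\) of that distance. Therefore at the start of epoch \(k+1\) it is still at distance at least \(\frac{d-\rho}{2}\) below its perimeter vertex and so lies in some capture region \(A_{k'}\), which is exactly the claim.

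For the first bullet I would begin with the same escape-time estimate: an intruder in \(A_*\) at the start of epoch \(k\) is at distance \(x\ge\frac{d-\rho}{2}\) below its perimeter vertex, so it cannot reach the perimeter before time \(x/v\ge\frac{d-\rho}{2v}\ge T_{\mathrm{ep}}>s+\mathrm{SL}\); thus it remains inside the swept subtree for the entire sweep. It then remains to show the sweep actually intercepts it. Here I would track the intruder along its fixed root-to-perimeter path \(p=u_0,u_1,\dots,u_{d-\rho}\), comparing its monotonically decreasing position \(I(t)\) against the projection \(\phi(t)\) of the defender's location onto this path: \(\phi\) is continuous, agrees with the defender's true position whenever the defender is on the path, satisfies \(\phi(0)=0<I(0)\), and attains \(d-\rho\) when the sweep reaches the intruder's leaf, so \(\phi\) and \(I\) must cross.

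The hard part is upgrading this crossing to a genuine physical interception. When \(\phi=I\) occurs at a branch vertex while the defender is off in a sibling subtree, the intruder can \emph{slip past} that vertex during the detour instead of being caught, so coverage of the subtree alone does not suffice. I would resolve this using the displacement bound to control the length of such detours, arguing that after any slip the defender, which ascends at unit speed while the intruder climbs at speed \(v<1\), re-closes the gap on the path above that vertex before the intruder can reach the perimeter — the factor \(\tfrac12\) in the definition of the capture region is precisely the slack that lets the unit-speed sweep overtake the speed-\(v\) intruder within the remaining distance \(x\le d-\rho\). This overtaking estimate is the delicate step; once it is in hand, the two displacement bounds reduce the rest of the argument to routine arithmetic.
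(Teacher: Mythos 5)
Your epoch-length computation, the reduction of the hypothesis to the displacement bound \(vT_{\mathrm{ep}}\le\frac{d-\rho}{2}\), and your second bullet are all correct; in fact your second-bullet argument is cleaner than the paper's, which splits into cases according to which subtree the intruder enters and whether its entrance has already been swept --- your observation that ``not captured implies it has closed at most \(\frac{d-\rho}{2}\) of its initial \(d-\rho\)'' makes that case analysis unnecessary. Your setup for the first bullet (trapping plus a continuity argument for the projection onto the intruder's path) is also sound, and is more careful than the paper itself, whose proof of this bullet simply asserts that full traversal of the subtree, combined with the intruder being unable to reach the perimeter in time, implies capture.

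The genuine gap is the step you flag and then defer: ``this overtaking estimate is the delicate step; once it is in hand\dots'' is precisely the crux of the first bullet, and it is left unproven. Moreover, the mechanism you sketch for it is not right: after the intruder slips past a branch vertex, the DFS defender does not ascend at unit speed --- it must first finish the side subtree it is in, and may descend into further sibling subtrees hanging at that vertex and at every vertex above it, so its net upward progress along the intruder's path is not governed by any speed comparison, and no edge-by-edge overtaking estimate holds. Fortunately, none is needed; only the trapping you already established is. Observe that the crossing you exhibit (between the start of the sweep and the visit to the intruder's entry leaf, where \(\phi-I\) passes from negative to positive) is exactly the kind that admits a slip-past: when the defender sits inside a side subtree hanging at a path vertex, \(\phi\) is frozen at that vertex while \(I\) decreases through it, so \(\phi-I\) can only increase there. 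Use instead a crossing in the opposite direction. Let \(p\) be the intruder's perimeter vertex and let \(t_2\) be the last time the sweep is at \(p\), i.e., when it exits \(B(p)\) for good; since \(t_2\) strictly precedes the end of the epoch, your displacement bound keeps the intruder strictly below \(p\) at time \(t_2\). Hence \(\phi-I\) is positive when the defender visits the entry leaf and negative at \(t_2\), so somewhere in between it passes from positive to negative. Such a decreasing crossing forces a physical collision: off-path positions of the defender project to path vertices, and near any configuration with the defender strictly inside a side subtree the quantity \(\phi-I\) is increasing, so at a decreasing crossing the defender must be on the path, at the intruder's exact location. Inserting this argument in place of your deferred ``overtaking estimate'' completes the proof, and supplies the rigorous content behind the interception claim that the paper states without justification.
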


\begin{proof}
    We begin by considering the length of time taken by the defending vehicle to complete a single epoch. The vehicle must first travel from \(r\) to the root of the subtree containing \(M_*\) taking time \(s\). The defender must then complete a single iteration of the Sweeping algorithm on the subtree. Since the subtree in question is of depth \(d-s\) and the defending vehicle moves with unit speed, Lemma \ref{lem:sweep-length} gives us that this takes time \(2(\frac{\delta^{d-s+1}-1}{\delta - 1}-1)\). Finally, the vehicle must return to the root taking another \(s\) time units. Thus a single epoch takes a total of \(2(s + \frac{\delta^{d-s+1}-1}{\delta - 1}-1)\) time units. This means that an intruder in \(M_*\) can move at most distance \(2v(s + \frac{\delta^{d-s+1}-1}{\delta - 1}-1)\) during an epoch. Even assuming that the intruder is as close to a perimeter vertex as possible, while being within \(M_*\) and is moving at the maximum velocity permitted by the constraint above, it can only travel at most distance \(\frac{1}{2}(d-\rho)\). However, during this time the entire subtree containing \(M_*\) has been traversed by the defender and the defender has returned to the root, implying the intruder in question has been captured. The first result follows.

    For the second result, we first consider the case that an intruder enters the environment in a different subtree than \(M_*\) for epoch \(k\). Suppose that it enters in the subtree containing capture region \(M_j\). We first note that the region \(M_j\) includes the intruder entrances for the subtree and thus the intruder begins in the capture region \(M_j\). By the same analysis as above, the furthest an intruder can travel in an epoch is \(\frac{1}{2}(d-\rho)\). Thus, even given the maximum travel time within epoch \(k\) (which would occur when the intruder enters just as epoch \(k\) begins), the intruder still lies within the capture region as epoch \(k+1\) begins and the result follows. We now consider the case that an intruder enters the environment in the same subtree as \(M_{*}\) for epoch \(k\). There exist time intervals within epoch \(k\) such that intruders arriving in those intervals will be captured in epoch \(k\) and thus will not be located in a capture region during epoch \(k+1\). The most obvious of these intervals is the first \(s\) time units of epoch \(k\), where the defending vehicle is moving to vertex \(m_*\). Intruders entering during this interval will be within the subtree rooted at \(m_*\) at the beginning of the sweep and will be captured during it. The other intervals arise from the order in which the Sweeping algorithm visits intruder entrances in \(M_*\). First note that the Sweeping algorithm visits every intruder entrance in \(M_*\) exactly once from left to right. Intruders that enter during the course of epoch \(k\) at an intruder entrance that has not yet been visited during the sweep will be captured in epoch \(k\) as they are still in the path of the sweep. Meanwhile, intruders that enter at an intruder entrance that has been already been visited during epoch \(l\) will still be in the capture region during epoch \(k+1\) by the same token as in the previous case. This concludes the proof of the second result.
\end{proof}

\vspace{-8pt}
\begin{theorem}
\label{thm:CaSS-Comp}
Compare and Subtree Sweep with sweeping depth \(s\) is \(\delta^{s}\)-competitive in environments where 
\[4v\left ( s+\frac{\delta^{d-s+1}}{\delta - 1} -1 \right ) \leq d-\rho\]
\end{theorem}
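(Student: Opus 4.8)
The plan is to run a per-epoch accounting argument that leverages the two conclusions of Lemma~\ref{lemma:CaSS-Conditions} together with the defining feature of CaSS: in every epoch it sweeps the \emph{fullest} of the $\delta^s$ capture regions. Write $c_m := |A_*^{(m)}|$ for the number of intruders residing in the selected capture region at the start of epoch $m$, and let $C_m$ denote the set of all intruders lying in \emph{some} capture region at the start of epoch $m$. The single inequality that drives everything is the averaging bound $c_m = \max_{k} |A_k^{(m)}| \ge \frac{1}{\delta^s}\sum_{k=1}^{\delta^s} |A_k^{(m)}| = \frac{1}{\delta^s}|C_m|$. By the first conclusion of Lemma~\ref{lemma:CaSS-Conditions}, all $c_m$ of the selected intruders are captured during epoch $m$, so this bound says CaSS captures at least a $\frac{1}{\delta^s}$ fraction of everything sitting in capture regions at each epoch boundary.

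Next I would classify every released intruder using the second conclusion of Lemma~\ref{lemma:CaSS-Conditions}. Let $k_I$ be the epoch in which intruder $I$ enters. Either $I$ is captured by CaSS during epoch $k_I$ (the Case A intruders, collected in a set $A$), or $I$ lies in some capture region at the start of epoch $k_I+1$ (Case B); I assign each Case B intruder to that unique epoch $k_I+1$ and let $D_m$ be the set assigned to epoch $m$. The two cases are exhaustive and mutually exclusive, so the total intruder count is $N = |A| + \sum_m |D_m|$, and since any offline algorithm captures at most every intruder, $\mathcal{O}(\mathcal{I}) \le N$. For the lower bound on CaSS I would observe that the captured collections $A$ and $\{A_*^{(m)}\}_m$ are pairwise disjoint: a Case A intruder enters and is captured within a single epoch, so it is never present in a capture region at the \emph{start} of any epoch, while the start-of-epoch sets $A_*^{(m)}$ are disjoint across $m$ because captured intruders are removed. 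Hence $\mathrm{CaSS}(\mathcal{I}) \ge |A| + \sum_m c_m$.

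Combining is then routine: since $D_m \subseteq C_m$ we have $|D_m| \le |C_m| \le \delta^s c_m$, and therefore $\mathcal{O}(\mathcal{I}) \le |A| + \sum_m |D_m| \le |A| + \delta^s \sum_m c_m \le \delta^s\big(|A| + \sum_m c_m\big) \le \delta^s\,\mathrm{CaSS}(\mathcal{I})$, which is exactly the claimed $\delta^s$-competitiveness.

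The step I expect to be the main obstacle is the bookkeeping that keeps the two kinds of captures disjoint while absorbing the Case A term without inflating the ratio past $\delta^s$. The averaging inequality controls only the Case B intruders via $|D_m| \le \delta^s c_m$, so a naive summation leaves a stray $+|A|$; the argument works precisely because the Case A intruders are themselves captured by CaSS (contributing to both sides) and are disjoint from the $A_*^{(m)}$ counts, which lets $|A|$ be folded in using $\delta^s \ge 1$. I would also verify the small but necessary facts that each Case B intruder is assigned to a well-defined epoch (so the $D_m$ genuinely partition the Case B intruders and $N = |A| + \sum_m |D_m|$ holds), that a consistent convention handles intruders arriving at an epoch boundary, and that $\mathrm{CaSS}(\mathcal{I}) \ge 1$ whenever any intruder is released so the competitive ratio is well defined.
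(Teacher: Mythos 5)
Your proposal is correct and follows essentially the same route as the paper: the paper's own proof is a two-line appeal to Lemma~\ref{lemma:CaSS-Conditions}, asserting that CaSS captures a $\frac{1}{\delta^s}$ fraction of all intruders entering in every epoch, which is exactly the max-versus-average accounting you carry out explicitly. Your write-up simply supplies the bookkeeping (disjointness of the per-epoch captured sets, assignment of each intruder to a unique epoch, and the boundary conventions) that the paper leaves implicit.
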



\begin{proof}
    From Lemma \ref{lemma:CaSS-Conditions}, we ensure that Compare and Subtree Sweep captures \(\frac{1}{\delta^s}\) of all intruders entering the environment in every epoch. Thus, the result follows.
\end{proof}

\begin{algorithm}[h]
    
    \caption{Compare and Subtree Sweep}
    \begin{algorithmic}[1]
    \label{alg:CaSS}
    \renewcommand{\algorithmicrequire}{\textbf{Input:}}
    \STATE Defender is at vertex $r$ and waits for $2(s+\frac{\delta^{d-s+1}}{\delta - 1} -1)$ time units after the arrival of the first intruder
    \FOR{\textit{each} epoch}
    \STATE $M_*$ = $\argmax \{|M_{1}|, \dots , |M_{\delta^s}|\}$ 
    \STATE $m_*$ = root vertex of subtree containing $M_*$
    \STATE Move to $m_*$ and perform a Sweep on $M_*$
    \STATE Move to $r$
    \ENDFOR

    \end{algorithmic}
\end{algorithm}
\vspace{-12pt}

\begin{figure}[t]
    \centering
    \vspace{-15pt}
    \includegraphics[width=0.48\textwidth]{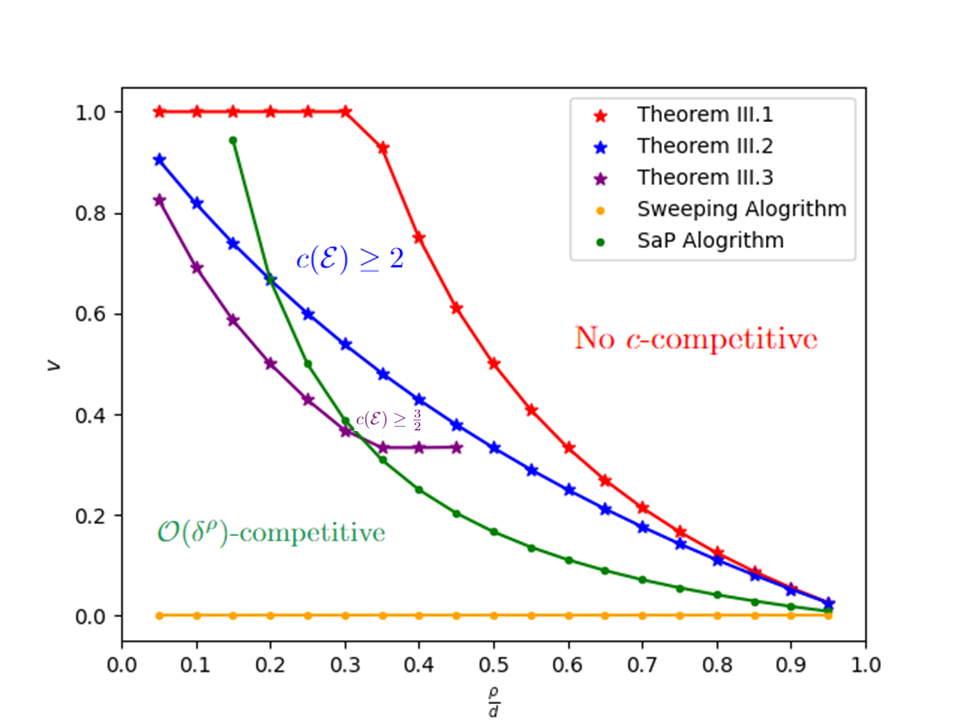}
    \caption{Parameter regimes for fundamental limits, Sweeping algorithm, and Stay at Perimeter Algorithm for varying values of \(\rho\) on a full tree with depth \(d=20\) and branching factor \(\delta=3\). The SaP algorithm is said to be \(\mathcal{O}(\delta^{\rho})\) competitive as its competitiveness is given by \(\frac{3 \cdot \delta^{\rho} -1}{2}\), where the $\mathcal{O}(\cdot)$ refers to the Landau notation.}
    \vspace{-18pt}
    \label{fig:nice-algorithms}
\end{figure}

\section{Numerical Visualizations}\label{section:numerical-viz}
We give a numerical visualization of the bounds derived for the full tree environment. Figure~\ref{fig:nice-algorithms} shows the \((\frac{\rho}{d} , v)\) parameter regimes for a fixed value of \(d=20\) and \(\delta = 3\) and a varying value of \(\rho\). Each point in the figure represents one of the possible integer values of \(\rho\) for the environment.

Values of \(v\) above the points corresponding to Theorem~\ref{thm:finit-com} and Theorem~\ref{thm:two-comp} correspond to velocities where there exist no \(c\)-competitive or no algorithm whose performance is better than 2-competitive, respectively. The space between the points corresponding to Theorem~\ref{thm:not-one-comp} and Theorem~\ref{thm:two-comp} give intruder velocities for which no algorithm can do better than \(\frac{3}{2}\)-competitive. Interestingly, this region only exists for environments where \(\frac{\rho}{d}<0.5\). This is due to the extra constraint on the environment parameters required by Theorem~\ref{thm:not-one-comp}.

It is unsurprising that the bound for the Sweeping algorithm is close to zero for all values of \(\rho\) as the bound is exponential with respect to the environment depth \(d\). While the bound for the SaP Algorithm, is significantly more permissive, it comes at the cost of exponential competitiveness with respect to \(\rho\). This may be an acceptable trade-off when \(\rho\) is small. However, a Sweeping strategy gives significantly better competitiveness for only a slightly more strict speed requirement as \(\rho\) approaches \(d\). 

Figure~\ref{fig:CaSS-Plot} shows the parameter regimes for which the Compare and Subtree Sweep algorithm is effective for several different sweeping depths. We see that for \(d=5\) and \(\delta=2\), CaSS  offers a larger effective area than the Sweeping algorithm for values of \(s\) that exceed 1. Thus, it is not advantageous to deploy CaSS with \(s=1\) for this combination of \(d\) and \(\delta\), as it offers a worse competitive ratio at a stricter velocity requirement. This is not always the case, however. For instance, setting \(\delta=3\) causes CaSS to always eclipse the parameter regime of the Sweeping algorithm. 

\begin{figure}[h]
    \centering
    \vspace{-10pt}
    \includegraphics[width=0.45\textwidth]{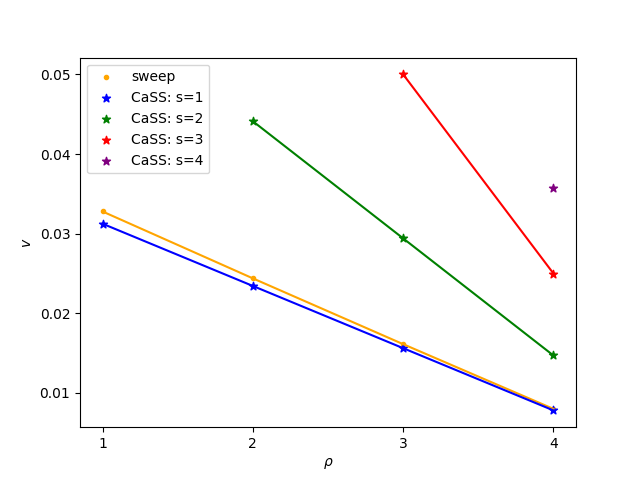}
    \vspace{-13pt}
    \caption{Parameter regimes for Compare and Subtree Sweep Algorithm and Sweeping Algorithm for all values of \(\rho\) on a full tree with depth \(d=5\) and branching factor \(\delta=2\).}
    \label{fig:CaSS-Plot}
    \vspace{-15pt}
\end{figure}

\section{Conclusions and Future Directions}
\label{section:conclusions}
This paper analyzed a scenario in which a single defending vehicle must defend a perimeter in a full tree environment from intruders that may enter at any time. We designed and analyzed three algorithms for the defending vehicle each with a provably finite competitive ratio. Specifically, the Sweeping algorithm is 1-competitive, i.e., it matches the performance of an optimal offline, but requires an exponential scaling constraint on the intruder's velocity. The Compare and Subtree Sweep algorithm offers a slight improvement on this constraint, and gives a range of competitiveness based on an externally chosen sweeping depth. Finally, the Stay at Perimeter algorithm is much more permissive in terms of its constraint on the intruder velocity. However, this is at the cost of an competitive ratio that scales exponentially with the perimeter depth \(\rho\), making it most effective when the perimeter size is small. We also derived three fundamental limits on the competitiveness of any online algorithm.

The work presented here suggests that there is a trade-off for online algorithms between exponential competitiveness and exponentially low intruder velocities. However, the exponential competitiveness results are based on the implicit assumption that an optimal offline algorithm can capture all intruders. To this end, future directions will include an improved worst-case analysis for the optimal offline. Additionally, we plan to expand the current analysis to trees that are not full, and create strategies for multiple defenders.
\vspace{-15pt}

\appendix

\subsection{Proof of Theorem~\ref{thm:finit-com}}
Inspired by the idea from \cite{bajaj2021competitive}, we will construct an input instance for \(\mathcal{E}\) consisting of two parts: a stream and a burst. We first select two perimeter vertices \(p_{i}, p_{j} \in P(\enve)\) such that \(dist(p_{i} , p_{j}) = 2\rho\). Since \(\enve\) is a full tree and \(\delta > 1\), there is always such a pair of vertices. Now select \(\ell_{i}\) and \(\ell_{j}\) arbitrarily from \(ent(B(p_{i}))\) and \(ent(B(p_{j}))\) respectively. Note that intruders arriving at \(\ell_{i}\) (resp.~\(\ell_{j}\)) will be lost when they arrive at \(p_{i}\) (resp.~\(p_{j}\)).

Let a burst of intruders, \(burst(\ell, n)\), be the simultaneous arrival of \(n\) intruders into the environment at intruder entrance \(\ell \in L\). Further, let a stream of intruders, \(stream(\ell, t)\), be the repeated arrival of a single intruder into the environment at intruder entrance \(\ell \in L\) with a delay of \(t\) time units between arrivals. The first part of the input is a burst, \(burst(\ell_j, c+1)\), that arrives at the earliest time that any online algorithm arrives at \(p_i\). The second part of the input consists of a stream, \(stream(\ell_{i}, 2d)\), beginning at time \(d\) and terminating as soon as the burst is released. 

Suppose the defender adopts an algorithm that never moves to \(p_{i}\). In this case, the stream will not terminate and all released intruders will be lost. Since an offline algorithm can move to \(p_{i}\) before any intruders in the stream are even released, it can capture all intruders in the stream. Thus, the result holds for this class of online algorithms.

We now consider online algorithms that do eventually move to \(p_{i}\) and show that the result still holds. Suppose the online defender arrives at \(p_{i}\) at time \(t\). At this time \(t\), the burst will be released at \(\ell_j\). Since \(v > \frac{d-\rho}{2\rho}\), the defender cannot reach \(p_j\) before the burst is lost. However, since an optimal offline algorithm is aware of the timing of the burst's release, it can always arrive at \(p_j\) in time to capture it. 

Since it is impossible for any online algorithm to move the vehicle so that it can capture any of the \(c+1\) intruders in the burst, we must consider how many intruders from the stream can be captured in order to compute the competitive ratio. If \(t < d\), then the stream never begins and the online defender captures no intruders. Here, an optimal offline can simply move to \(p_j\) by time \(t+ \frac{d-\rho}{v}\), which it can always do as \(t \geq \rho\). Thus the competitive ratio is infinite for this scenario.  If \(t=d\), then a single intruder from the stream has been released when the burst is released. An online algorithm cannot capture the burst but can capture the single stream intruder by just waiting at \(p_i\). Meanwhile, an optimal offline defender can once again ignore the single stream intruder and capture the burst. As the offline algorithm captures \(c+1\) intruders and the online captures only a single intruder, this class of online algorithms is \(c+1\)-competitive. Finally, when \(t>d\), we have at least the same competitiveness result as when \(t=d\). As the delay between stream intruder releases is \(2d\) there is only ever a single stream intruder present in the environment at any given time. Thus the online defender can only ever capture a single intruder from the stream. Again the offline defender can always guarantee the capture of the burst guaranteeing \(c+1\)-competitiveness. Indeed, if \(t\) is sufficiently larger than \(d\), then the offline defender can capture some of the intruders from the stream before moving to capture the burst giving an even higher competitive ratio.
\vspace{-4pt}
\subsection{Proof of Theorem~\ref{thm:not-one-comp}}
To show this result, we describe a class of inputs that consist of 3 intruders deployed according to a specific schedule. First, we select 3 distinct perimeter vertices \(p_{1}\), \(p_{2}\), and \(p_3\) such that the distance between every pair \(p_i\), \(p_j\) is \(2\rho\). We then select \(\ell_1\), \(\ell_2\), and \(\ell_3\) from \(ent(B(p_{1}))\), \(ent(B(p_{2}))\), and \(ent(B(p_{3}))\) respectively. 

We first consider the case when \(v = \frac{d-\rho}{d+3\rho}\). Consider the following input. The intruder released at \(\ell_1\) is released at time \(d+3\rho\), the intruder released at \(\ell_2\) is released at time \(d+\rho\), and the intruder released at \(\ell_3\) is released at time \(d+3\rho\). We will refer to these intruders as \(A\), \(B\), and \(C\) respectively. As none of \(A\), \(B\), or \(C\) will be lost at the same perimeter vertex, there is no method to capture any of these intruders simultaneously. Therefore, in any strategy that captures all intruders there must be an intruder that is captured first, second, and last. To begin, we will show that there exist two solutions that capture all intruders with capture orders \(A, B, C\) and \(C, B, A\) respectively. We will then show that these are the only such solutions that capture \(A\) or \(C\) first. 

A possible capture solution with capture order \(A,B,C\) is as follows. At time \(d+3\rho\) the defending vehicle arrives at \(\ell_1\), capturing \(A\) instantaneously as it enters the environment. The defending vehicle then moves to capture \(B\) at \(p_2\). As \(p_2\) is distance \(d+\rho\) away from \(\ell_1\), the defending vehicle arrives at \(p_2\) at time \(2d+5\rho\), just in time to capture \(B\). Finally, the defending vehicle moves to \(p_3\) to capture \(C\). As \(p_3\) is at a distance of \(2\rho\) from \(p_2\), the defending vehicle has just enough time to do this. 
Notice that the captures of \(B\) and \(C\) occur at the latest time possible, with any deviation from the described route resulting in the loss of at least one of the intruders. This implies that any delay in capturing \(A\) would also result in the loss of an intruder. Therefore, any algorithm that hopes to capture all intruders and capture \(A\) first must follow exactly the path described above, meaning that there is only a single capture solution with capture order \(A,B,C\). Also, note that \(A,C,B\) is not a valid capture solution as even just the travel time for the defender from \(\ell_1\) (where it captures \(A\)) to \(p_3\) (where it must reach to capture \(C\)) causes the loss of \(B\). Therefore, the only solution for capturing all three intruders in the input that begins with capturing \(A\) is the one described above.  As \(A\) and \(C\) are released simultaneously, there exists a symmetric capture solution where the defending vehicle arrives at \(\ell_3\) at time \(d+3\rho\) capturing \(C\), then captures \(B\) at \(p_2\), and finally moves to \(p_1\) to capture \(A\). Reasoning as before, it follows that there is only a single method for capturing all intruders in this input that captures \(C\) first.

We now show that there is no method to capture all three intruders that begins by capturing \(B\). To do this, we will show that even if the defender captures \(B\) as early as possible (giving it the most time to capture \(A\) and \(C\)) and moves optimally to capture \(A\) and \(C\) (even if they have not yet been released) it still cannot capture all three intruders. The earliest point in time that \(B\) can be captured is just as it has been released at time \(d+\rho\). At this time, neither of \(A\) or \(C\) have been released (nor will they be for another \(2\rho\) time units). However, let us assume that the algorithm for the defender causes it to begin moving to towards \(p_1\) or \(p_3\) which would minimize the distance it must travel to capture \(A\) or \(C\) respectively. As \(A\) and \(C\) are released simultaneously (and thus lost simultaneously), we will assume without loss of generality that the defending vehicle moves towards \(\ell_1\) to capture \(A\). Now the soonest that the defending vehicle can reach \(p_1\) is at time \(2d+2\rho\). As \(d > \rho\), \(2d+2\rho > d+3\rho\), meaning that at the moment the defending vehicle arrives at \(p_1\), \(A\) and \(C\) have been in the environment for \(d-\rho\) time units. Thus, \(A\) and \(C\) are at a distance of \((d-\rho)(1-v)\) from \(p_1\) and \(p_3\), respectively. Starting from \(p_1\), it will take the defending vehicle a minimum of \(2(d-\rho)\frac{1-v}{1+v}\) time units to capture \(A\) and return to \(p_1\). It will then require an additional \(2\rho\) time units to reach \(p_3\), which it must do before \(C\) is lost at time \(d+3\rho+\frac{d-\rho}{v}\). This means that in order to capture all intruders, we must have the following: \((d+\rho) + (d+\rho) + 2(d-\rho)\frac{1-v}{1+v} + 2\rho \leq d+3\rho+\frac{d-\rho}{v}\).
However, this relation is a direct contraction to condition~\eqref{eq:eps}, as \(\epsilon=0\) when \(v=\frac{d-\rho}{d+3\rho}\). Therefore, all three intruders can not be captured in strategies where \(B\) is captured immediately. It can be easily shown that not capturing \(B\) immediately does not offer any improvement.



In summary, when \(v=\frac{d-\rho}{d+3\rho}\), the only method to capture all intruders in the described input is for the defending vehicle to either arrive at \(p_1\) or \(p_3\) exactly at time \(d+3\rho\). However, algorithms that arrive at \(p_1\) by this time, can only capture at most two intruders in an input that releases intruders at \(p_2\) and \(p_3\) at time \(d+3\rho\) and at \(p_1\) at time \(d+\rho\). Similarly, algorithms that arrive at \(p_3\) by this time, can only capture at most two intruders in an input that releases intruders at \(p_1\) and \(p_2\) at time \(d+3\rho\) and at \(p_3\) at time \(d+\rho\). Therefore, without prior (offline) knowledge of where  intruder \(A\) or \(C\) will be released and which vertices are \(\ell_1\), \(\ell_2\), and \(\ell_3\) no algorithm can capture all three intruders. Therefore, \(c(\mathcal{E}) \geq \frac{3}{2}\) when \(v=\frac{d-\rho}{d+3\rho}\).

The case of when \(v > \frac{d-\rho}{d+3\rho}\) follows a similar argument. However the input changes to: \(A\) is released at time \(d+3\rho\) at \(\ell_1\), \(B\) is released at time \(d+\rho + \epsilon\) at \(\ell_2\), where \(\epsilon = d+3\rho - \frac{d-\rho}{v}\), and \(C\) is released at time \(d+3\rho + \epsilon\) for the same \(\epsilon\) at \(\ell_3\). For this input, the only valid capture order is \(A\), \(B\), \(C\) and the result follows by a simlar argument to before.

\addtolength{\textheight}{-12cm}   






\bibliographystyle{ieeetr}
\bibliography{refs}

\begin{thebibliography}{10}

\bibitem{basnet1999heuristics}
C.~Basnet, L.~R. Foulds, and J.~M. Wilson, ``Heuristics for vehicle routing on tree-like networks,'' {\em Journal of the Operational Research Society}, vol.~50, no.~6, pp.~627--635, 1999.

\bibitem{shishika2020review}
D.~Shishika and V.~Kumar, ``A review of multi agent perimeter defense games,'' in {\em Decision and Game Theory for Security: 11th International Conference, GameSec 2020, College Park, MD, USA, October 28--30, 2020, Proceedings 11}, pp.~472--485, Springer, 2020.

\bibitem{von2020multiple}
A.~Von~Moll, E.~Garcia, D.~Casbeer, M.~Suresh, and S.~C. Swar, ``Multiple-pursuer, single-evader border defense differential game,'' {\em Jour. of Aerospace Info. Systems}, vol.~17, no.~8, pp.~407--416, 2020.

\bibitem{macharet2020adaptive}
D.~G. Macharet, A.~K. Chen, D.~Shishika, G.~J. Pappas, and V.~Kumar, ``Adaptive partitioning for coordinated multi-agent perimeter defense,'' in {\em 2020 IEEE/RSJ International Conference on Intelligent Robots and Systems (IROS)}, pp.~7971--7977, IEEE, 2020.

\bibitem{smith2009dynamic}
S.~L. Smith, S.~D. Bopardikar, and F.~Bullo, ``A dynamic boundary guarding problem with translating targets,'' in {\em Proceedings of the 48h IEEE Conference on Decision and Control (CDC) held jointly with 2009 28th Chinese Control Conference}, pp.~8543--8548, IEEE, 2009.

\bibitem{bullo2011dynamic}
F.~Bullo, E.~Frazzoli, M.~Pavone, K.~Savla, and S.~L. Smith, ``Dynamic vehicle routing for robotic systems,'' {\em Proceedings of the IEEE}, vol.~99, no.~9, pp.~1482--1504, 2011.

\bibitem{das2022guarding}
G.~Das and D.~Shishika, ``Guarding a translating line with an attached defender,'' in {\em 2022 American Control Conference (ACC)}, pp.~4436--4442, IEEE, 2022.

\bibitem{pourghorban2022target}
A.~Pourghorban, M.~Dorothy, D.~Shishika, A.~Von~Moll, and D.~Maity, ``Target defense against sequentially arriving intruders,'' in {\em IEEE Intern. Conf. on Decision and Control (CDC)}, pp.~6594--6601, 2022.

\bibitem{bajaj2021competitive}
S.~Bajaj, E.~Torng, and S.~D. Bopardikar, ``Competitive perimeter defense on a line,'' in {\em 2021 American Control Conference (ACC)}, pp.~3196--3201, IEEE, 2021.

\bibitem{bajaj2022competitive}
S.~Bajaj, E.~Torng, S.~D. Bopardikar, A.~Von~Moll, I.~Weintraub, E.~Garcia, and D.~W. Casbeer, ``Competitive perimeter defense of conical environments,'' in {\em 2022 IEEE 61st Conference on Decision and Control (CDC)}, pp.~6586--6593, IEEE, 2022.

\bibitem{sleator1985amortized}
D.~D. Sleator and R.~E. Tarjan, ``Amortized efficiency of list update and paging rules,'' {\em Communications of the ACM}, vol.~28, no.~2, pp.~202--208, 1985.

\bibitem{gutierrez2006whack}
S.~Guti{\'e}rrez, S.~O. Krumke, N.~Megow, and T.~Vredeveld, ``How to whack moles,'' {\em Theoretical computer science}, vol.~361, no.~2-3, pp.~329--341, 2006.

\bibitem{borra2015continuous}
D.~Borra, F.~Pasqualetti, and F.~Bullo, ``Continuous graph partitioning for camera network surveillance,'' {\em Automatica}, vol.~52, pp.~227--231, 2015.

\bibitem{kalyanam2016pursuit}
K.~Kalyanam, D.~W. Casbeer, and M.~Pachter, ``Pursuit of a moving target with known constant speed on a directed acyclic graph under partial information,'' {\em SIAM Journal on Control and Optimization}, vol.~54, no.~5, pp.~2259--2273, 2016.

\bibitem{sundaram2017pursuit}
S.~Sundaram, K.~Kalyanam, and D.~W. Casbeer, ``Pursuit on a graph under partial information from sensors,'' in {\em 2017 American Control Conference (ACC)}, pp.~4279--4284, IEEE, 2017.

\bibitem{megiddo1988complexity}
N.~Megiddo, S.~L. Hakimi, M.~R. Garey, D.~S. Johnson, and C.~H. Papadimitriou, ``The complexity of searching a graph,'' {\em Journal of the ACM (JACM)}, vol.~35, no.~1, pp.~18--44, 1988.

\bibitem{kolling2009pursuit}
A.~Kolling and S.~Carpin, ``Pursuit-evasion on trees by robot teams,'' {\em IEEE Transactions on Robotics}, vol.~26, no.~1, pp.~32--47, 2009.

\end{thebibliography}

\end{document}